\theoremstyle{plain}
\newtheorem{theorem}{Theorem}[section]
\newtheorem{lemma}[theorem]{Lemma}
\newtheorem{proposition}[theorem]{Proposition}
\newtheorem{example}[theorem]{Example}
\theoremstyle{definition}
\newtheorem{definition}[theorem]{Definition}
\newtheorem{assumption}[theorem]{Assumption}
\newtheorem{remark}[theorem]{Remark}
\theoremstyle{remark}
\numberwithin{equation}{section}
\newcommand{\rec}{\mathop{\rm rec}\nolimits}
\newcommand{\qint}{\mathop{\rm qint}\nolimits}
\newcommand{\ES}{\mathop{\rm ES}\nolimits}
\newcommand{\ext}{\mathop{\rm ext}\nolimits}
\newcommand{\lsc}{\mathop{\rm lsc}\nolimits}
\newcommand{\usc}{\mathop{\rm usc}\nolimits}
\newcommand{\barr}{\mathop{\rm bar}\nolimits}
\newcommand{\Closure}{\mathop{\rm cl}\nolimits}
\newcommand{\Interior}{\mathop{\rm int}\nolimits}
\newcommand{\E}{{\mathbb E}}
\newcommand{\N}{{\mathbb N}}
\newcommand{\R}{{\mathbb R}}
\newcommand{\probp}{\mathbb P}
\newcommand{\cA}{{\mathcal{A}}}
\newcommand{\cC}{{\mathcal{C}}}
\newcommand{\cF}{{\mathcal{F}}}
\newcommand{\cU}{{\mathcal{U}}}
\newcommand{\cZ}{{\mathcal{Z}}}
\def\keywords{\vspace{.5em}
{\noindent\textbf{Keywords}:\,\relax%
}}
\def\JELclassification{\vspace{.5em}
{\noindent\textbf{JEL classification}:\,\relax%
}}
\def\MSCclassification{\vspace{.5em}
{\noindent\textbf{MSC}:\,\relax%
}}
\def\@fnsymbol#1{\ensuremath{\ifcase#1\or 1\or 2\or 3\or 4\or 5\or 6\or 7\or 8\else\@ctrerr\fi}}
\begin{document}

\title{Multi-utility representations of incomplete preferences
induced by set-valued risk measures}

\author{Cosimo Munari}
\affil{Center for Finance and Insurance and Swiss Finance Institute \\
Department of Banking and Finance, University of Zurich, Switzerland \\
\texttt{\normalsize cosimo.munari@bf.uzh.ch}}

\date{\today}

\maketitle

\begin{abstract}
We establish a variety of numerical representations of preference relations induced by set-valued risk measures. Because of the general incompleteness of such preferences, we have to deal with multi-utility representations. We look for representations that are both parsimonious (the family of representing functionals is indexed by a tractable set of parameters) and well behaved (the representing functionals satisfy nice regularity properties with respect to the structure of the underlying space of alternatives). The key to our results is a general dual representation of set-valued risk measures that unifies the existing dual representations in the literature and highlights their link with duality results for scalar risk measures.
\end{abstract}

\keywords{risk measures, dual representations, incomplete preferences, multi-utility representations}

\smallskip
\JELclassification{C60, G11}

\smallskip

\MSCclassification{91B06, 91G80}

\parindent 0em \noindent


\section{Introduction}
\label{intro}

This note is concerned with the numerical representation of preference relations induced by a special class of set-valued maps. Recall that a {\em preference (relation)} over the elements of a set $L$ is a reflexive and transitive binary relation on $L$. A preference is said to be {\em complete} if any two elements $x,y\in L$ are comparable in the sense that it is always possible to determine whether $x$ is preferred to $y$ or viceversa. Following the terminology of Dubra et al.~\cite{DubraMaccheroniOk2004}, a family $\cU$ of maps $u:L\to[-\infty,\infty]$ is a {\em multi-utility representation} of a preference $\succeq$ if for all $x,y\in L$ we have
\[
x\succeq y \ \iff \ u(x)\geq u(y) \ \ \mbox{for every} \ u\in\cU.
\]
In words, a multi-utility representation provides a numerical representation for the given preference relation via a family of ``utility functionals''. In view of their greater tractability, multi-utility representations play a fundamental role in applications. A standard problem in this context is to find representations that are at the same time parsimonious (the family of representing functionals is indexed by a small set of parameters) and well behaved (the representing functionals satisfy nice regularity properties with respect to the structure of the underlying set). This is especially important for incomplete preferences, which cannot be represented by a unique functional.

\medskip

In the broad field of economics and finance, incomplete preferences arise naturally in the presence of multi-criteria decision making. We refer to Aumann~\cite{Aumann1962} and Bewley~\cite{Bewley2002} for two classical references and to Ok~\cite{Ok2002}, Dubra et al.~\cite{DubraMaccheroniOk2004}, Mandler~\cite{Mandler2005}, Eliaz and Ok~\cite{EliazOk2006}, Kaminski~\cite{Kaminski2007}, Evren~\cite{Evren2008,Evren2014}, Evren and Ok~\cite{EvrenOk2011}, Bosi and Herden~\cite{Bosiherden2012}, Galaabaatar and Karni~\cite{GalaabaatarKarni2013}, Nishimura and Ok~\cite{NishimuraOk2016}, Bosi et al.~\cite{BosiEstevanZuanon2018}, and Bevilacqua et al.~\cite{BevilacquaBosiKaucicZuanon2019} for an overview of contributions to the theory of incomplete preferences and their multi-utility representations in the last twenty years.

\medskip

The goal of this note is to establish numerical representations of preference relations induced by a special class of set-valued maps that have been the subject of intense research in the recent mathematical finance literature. To introduce the underlying economic problem, consider an economic agent who is confronted with the problem of ranking a number of different alternatives represented by the elements of a set $L$. The agent has specified a target set of acceptable or attractive alternatives $A\subseteq L$. We assume that, if an alternative is not acceptable, it can be made acceptable upon implementation of a suitable admissible action. We represent the results of admissible actions by the elements of a set $M\subseteq L$ and assume that a given alternative $x\in L$ can be transformed through a given $m\in M$ into the new alternative $x+m$. The objective of the agent is then to identify, for each alternative, all the admissible actions that can be implemented to move said alternative inside the target set by way of translations. This naturally leads to the set-valued map $R:L\rightrightarrows M$ defined by
\[
R(x) := \{m\in M \,: \ x+m\in A\} = M\cap(A-x).
\]
The map $R$ can be seen as a generalization of the set-valued risk measures studied by Jouini et al.~\cite{JouiniMeddebTouzi2004}, Kulikov~\cite{Kulikov2008}, Hamel and Heyde~\cite{HamelHeyde2010}, Hamel et al.~\cite{HamelHeydeRudloff2011}, and Molchanov and Cascos~\cite{MolchanovCascos2016} in the context of markets with transaction costs; by Haier et al.~\cite{HaierMolchanovSchmutz2016} in the context of intragroup transfers; by Feinstein et al.~\cite{FeinsteinRudloffWeber2017}, Armenti et al.~\cite{ArmentiCrepeyDrapeauPapapantoleon2018}, and Ararat and Rudloff~\cite{AraratRudloff} in the context of systemic risk. We refer to these contributions for a discussion about the financial interpretation of set-valued risk measures in the respective fields of application and to Section~\ref{sect: applications} for some concrete examples in the context of multi-currency markets with transaction costs and systemic risk.

\medskip

The set-valued map $R$ defined above induces a preference relation on $L$ by setting
\[
x\succeq_R y \ :\iff \ R(x)\supseteq R(y).
\]
According to this preference, the agent prefers $x$ to $y$ if every admissible action through which we can move $y$ into the target set will also allow us to transport $x$ there. In other terms, $x$ is preferred to $y$ if it is easier to make $x$ acceptable compared to $y$. The goal of this note is to establish numerical representations of the preference $\succeq_R$. Since this preference, as shown below, is not complete in general, we have to deal with multi-utility representations. In particular, we look for representations consisting of (semi)continuous utility functionals. We achieve this by establishing suitable (dual) representations of the set-valued map $R$.

\medskip

Our results provide a unifying perspective on the existing dual representations of set-valued risk measures and on the corresponding multi-utility representations, which, to be best of our knowledge, have never been explicitly investigated in the literature. We illustrate the advantages of such a unifying approach by discussing applications to multi-currency markets with transaction costs and systemic risk. In addition, we highlight where our strategy to establishing dual representations differs from the standard arguments used in the literature. The note is structured as follows. The necessary mathematical background is collected in Section~\ref{sect: math background}. The standing assumptions on the space of alternatives and the main properties of the set-valued map under investigation are presented in Section~\ref{sect: setting}. The main results on dual and multi-utility representations are established in Section~\ref{sect: multi utility representations} and are applied to a number of concrete situations in Section~\ref{sect: applications}.


\section{Mathematical background}
\label{sect: math background}

In this section we collect the necessary mathematical background and fix the notation and terminology used throughout the paper. We refer to Rockafellar~\cite{Rockafellar1974} and Z\u{a}linescu~\cite{Zalinescu2002} for a thorough presentation of duality for topological vector spaces. Moreover, we refer to Aubin and Ekeland~\cite{AubinEkeland1984} for a variety of results on support functions and barrier cones.

\medskip

Let $L$ be a real locally convex Hausdorff topological vector space. The topological dual of $L$ is denoted by $L'$. Any linear subspace $M\subseteq L$ is canonically equipped with the relative topology inherited from $L$. The corresponding dual space is denoted by $M'$. For every set $A\subseteq L$ we denote by $\Interior(A)$ and $\Closure(A)$ the interior and the closure of $A$, respectively. We say that $A$ is convex if $\lambda A+(1-\lambda) A\subseteq A$ for every $\lambda\in[0,1]$ and that $A$ is a cone if $\lambda A\subseteq A$ for every $\lambda\in[0,\infty)$. The (lower) support function of $A$ is the map $\sigma_A:L'\to[-\infty,\infty]$ defined by
\[
\sigma_A(\psi) := \inf_{x\in A}\psi(x).
\]
The effective domain of $\sigma_A$ is called the barrier cone of $A$ and is denoted by
\[
\barr(A) := \{\,\psi\in L' \,: \ \sigma_A(\psi)>-\infty\}.
\]
It follows from the Hahn-Banach Theorem that, if $A$ is closed and convex, then it can be represented as the intersection of all the halfspaces containing it or equivalently
\begin{equation}
\label{eq: external representation}
A = \bigcap_{\psi\in L'}\{x\in L \,: \ \psi(x)\geq\sigma_A(\psi)\} = \bigcap_{\psi\in\barr(A)}\{x\in L \,: \ \psi(x)\geq\sigma_A(\psi)\}.
\end{equation}
If $A$ is a cone, then $\barr(A)$ coincides with the polar or dual cone of $A$, i.e.
\[
\barr(A) = A^+ := \{\psi\in L' \,: \ \psi(x)\geq0, \ \forall x\in A\}.
\]
If $A$ is a vector space, then $\barr(A)$ coincides with the annihilator of $A$, i.e.
\[
\barr(A) = A^\bot := \{\psi\in L' \,: \ \psi(x)=0, \ \forall x\in A\}.
\]
Finally, if $A+K\subseteq A$ for some cone $K\subseteq L$, then $\barr(A)\subseteq K^+$.


\section{The setting}
\label{sect: setting}

Throughout the remainder of the note, we assume that $L$ is a real locally convex Hausdorff topological vector space. We also fix a closed convex cone $K\subseteq L$ satisfying $K-K=L$ and consider the induced partial order defined by
\[
x\succeq_Ky \ :\iff \ x-y\in K.
\]
The above partial order is meant to capture an ``objective'' preference relation shared by all agents. This is akin to the ``better for sure'' preference in Drapeau and Kupper~\cite{DrapeauKupper2013}.

\begin{assumption}
We stipulate the following assumptions on $A$ and $M$:
\begin{enumerate}
  \item[(A1)] $A$ is closed, convex, and satisfies $A+K\subseteq A$.
  \item[(A2)] $M$ is a closed linear subspace of $L$ such that $M\cap K\neq\{0\}$.
  \item[(A3)] $R(x)\notin\{\emptyset,M\}$ for some $x\in L$.
\end{enumerate}
\end{assumption}

\smallskip

The next proposition collects a number of basic properties of the set-valued map $R$ and its associated preference $\succeq_R$. The properties of $R$ are aligned with those discussed in Hamel and Heyde~\cite{HamelHeyde2010} and Hamel et al.~\cite{HamelHeydeRudloff2011}.

\begin{proposition}
\label{general properties}
\begin{enumerate}[(i)]
  \item $\succeq_R$ is monotone with respect to $K$, i.e.\ for all $x,y\in L$
\[
x\succeq_Ky \ \implies \ x\succeq_Ry.
\]
  \item $\succeq_R$ is convex, i.e.\ for all $x,y\in L$ and $\lambda\in[0,1]$
\[
x\succeq_Ry \ \implies \ \lambda x+(1-\lambda)y\succeq_Ry.
\]
  \item $R(x)+K\cap M\subseteq R(x)$ for every $x\in L$.
  \item $R(x+m)=R(x)-m$ for all $x\in L$ and $m\in M$.
  \item $R(\lambda x+(1-\lambda)y)\supseteq\lambda R(x)+(1-\lambda)R(y)$ for all $x,y\in L$ and $\lambda\in[0,1]$.
  \item $R(x)$ is convex and closed for every $x\in L$.
  \item $R(x)\neq M$ for every $x\in L$.
\end{enumerate}
\end{proposition}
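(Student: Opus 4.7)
The plan is to derive (i)--(vi) as essentially immediate consequences of the definition $R(x) = M \cap (A - x)$ together with (A1) and (A2); property (ii) then drops out of (v) and (vi); and the only nontrivial step is (vii), where assumption (A3) enters in an essential way through a convexity-plus-closedness argument.

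Properties (i), (iii), (iv), (v), (vi) unfold directly. For (i), if $x - y \in K$ and $m \in R(y)$, then $x + m = (x - y) + (y + m) \in K + A \subseteq A$ by (A1). For (iii), $m + k \in M$ because $M$ is a subspace, and $x + m + k \in A + K \subseteq A$. Property (iv) is the tautology $\{n \in M : x + m + n \in A\} = -m + \{n' \in M : x + n' \in A\}$, using $M - m = M$. For (v), convex combinations of elements of $M$ stay in $M$ (subspace), and $\lambda(x+m_1) + (1-\lambda)(y+m_2) \in A$ by convexity of $A$. Finally, (vi) holds because $R(x)$ is the intersection of two closed convex sets.

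Property (ii) then follows from (v) combined with convexity of $R(y)$ (from (vi)): assuming $R(x) \supseteq R(y)$,
\[
R(\lambda x + (1-\lambda) y) \supseteq \lambda R(x) + (1-\lambda) R(y) \supseteq \lambda R(y) + (1-\lambda) R(y) = R(y),
\]
where the last equality uses convexity of $R(y)$.

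The main obstacle is (vii), where I plan the following contradiction argument. Suppose $R(y) = M$ for some $y \in L$, so that $y + M \subseteq A$. Take the point $x_0$ provided by (A3); since $R(x_0) \neq \emptyset$, there exists $m_0 \in M$ with $x_0 + m_0 \in A$. Given any $m' \in M$ and any $\lambda \in (0, 1]$, set $m := (m' - (1-\lambda) m_0)/\lambda \in M$, so that $m' = \lambda m + (1-\lambda) m_0$. Convexity of $A$ then yields
\[
\lambda y + (1-\lambda) x_0 + m' = \lambda(y + m) + (1-\lambda)(x_0 + m_0) \in A.
\]
Letting $\lambda \to 0^+$ and invoking the closedness of $A$ gives $x_0 + m' \in A$ for every $m' \in M$, i.e.\ $R(x_0) = M$, contradicting the choice of $x_0$ from (A3). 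Hence no such $y$ exists, proving (vii).
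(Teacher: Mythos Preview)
Your proof is correct and essentially matches the paper's argument; the one minor difference is that you derive (ii) from (v) and the convexity in (vi), whereas the paper proves (ii) directly by writing $\lambda x+(1-\lambda)y+m=\lambda(x+m)+(1-\lambda)(y+m)$ for $m\in R(y)$. Your proof of (vii) is the same convexity-plus-closedness contradiction as in the paper, with the roles of the two points interchanged.
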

\begin{proof}
To establish (i), assume that $x\succeq_Ky$ for $x,y\in L$. For every $m\in R(y)$ we have
\[
x+m=y+m+x-y\in A+K\subseteq A.
\]
This shows that $m\in R(x)$ as well, so that $x\succeq_Ry$. To establish (ii), take $\lambda\in[0,1]$ and assume that $x\succeq_Ry$. For every $m\in R(y)$ we have $y+m\in A$ and, hence, $x+m\in A$. This yields
\[
\lambda x+(1-\lambda)y+m=\lambda(x+m)+(1-\lambda)(y+m)\in \lambda A+(1-\lambda)A \subseteq A,
\]
showing that $m\in R(\lambda x+(1-\lambda)y)$. In sum, $\lambda x+(1-\lambda)y\succeq_R y$. To see that properties (iii) to (vi) hold, it suffices to recall that $R(x)=M\cap(A-x)$ for every $x\in L$. Finally, to establish (vii), assume that $R(x)=M$ for some $x\in L$. Take any $y\in L$ and assume that $R(y)$ is nonempty so that $y+m\in A$ for some $m\in M$. For all $n\in M$ and $\lambda\in(0,1]$ we have
\[
\lambda\bigg(x+\frac{1}{\lambda}(n-m)\bigg)+(1-\lambda)(y+m) \in \lambda A+(1-\lambda)A \subseteq A
\]
by convexity. Hence, letting $\lambda\to0$, we obtain $y+n\in A$ by closedness. Since $n$ was arbitrary, we infer that $R(y)=M$. This contradicts assumption (A3), showing that $R(x)\neq M$ must hold for every $x\in L$.
\end{proof}

\smallskip

\begin{remark}
(i) If $M$ is spanned by a single element, then $\succeq_R$ is complete. Indeed, in this case, we can always assume that $M$ is spanned by a nonzero element $m\in M\cap K$ by our standing assumption. Then, for every $x\in L$ such that $R(x)\neq\emptyset$ we see that
\[
R(x) = \{\lambda m \,: \ \lambda\in[\lambda_x,\infty)\}
\]
for a suitable $\lambda_x\in\R$. This shows that $\succeq_R$ is complete.

\smallskip

(ii) In general, the preference $\succeq_R$ is not complete when $M$ is spanned by more than one element. For instance, let $L=\R^3$ and assume that $K=A=\R^3_+$ and $M=\R^2\times\{0\}$. For $x=0$ and $y=(1,-1,0)$ we respectively have
\[
R(x)=\{m\in M \,: \ m_1\geq0, \ m_2\geq0\} \ \ \ \mbox{and} \ \ \ R(y)=\{m\in M \,: \ m_1\geq-1, \ m_2\geq1\}.
\]
Clearly, neither $x\succeq_Ry$ nor $y\succeq_Rx$ holds, showing that $\succeq_R$ is not complete.

\smallskip

(iii) Sometimes the preference $\succeq_R$ is complete even if $M$ is spanned by more than one element. For instance, let $L=\R^3$ and assume that $K=A=\R^2_+\times\R$ and $M=\{0\}\times\R^2$. For every $x\in L$ such that $R(x)\neq\emptyset$ we have
\[
R(x) = \{m\in M \,: \ m_2\geq-x_2\}.
\]
This shows that $\succeq_R$ is complete.
\end{remark}


\section{Multi-utility representations}
\label{sect: multi utility representations}

In this section we establish a variety of multi-utility representations of the preference induced by $R$, which are derived from suitable representations of the sets $R(x)$. As highlighted below, both representations have a strong link with (scalar) risk measures and their dual representations. We refer to the appendix for the necessary mathematical background and notation.

\medskip

The first multi-utility representation is based on the following scalarizations of $R$. Here, we set
\[
K_M^+ := \{\pi\in M' \,: \ \pi(m)\geq0, \ \forall m\in K\cap M\}.
\]

\smallskip

\begin{definition}
For every $\pi\in K_M^+$ we define a map $\rho_\pi:L\to[-\infty,\infty]$ by setting
\[
\rho_\pi(x) := \inf\{\pi(m) \,: \ m\in M, \ x+m\in A\}.
\]
Moreover, we define a map $u_\pi:L\to[-\infty,\infty]$ by setting
\[
u_\pi(x) := -\rho_\pi(x).
\]
\end{definition}

\smallskip

The functionals $\rho_\pi$ are examples of the risk measures introduced in F\"{o}llmer and Schied~\cite{FoellmerSchied2002} and generalized in Frittelli and Scandolo~\cite{FrittelliScandolo2006}. We refer to Farkas et al.~\cite{FarkasKochMunari2014,FarkasKochMunari2015} for a thorough investigation of such functionals at our level of generality. The next proposition features some of their standard properties, which follow immediately from Proposition~\ref{general properties}. Since the announced multi-utility representation will be expressed in terms of the negatives of the functionals $\rho_\pi$, the proposition is stated in terms of the utility functionals $u_\pi$.

\begin{proposition}
For every $\pi\in K_M^+$ the functional $u_\pi$ satisfies the following properties:
\begin{enumerate}[(i)]
  \item $u_\pi$ is translative along $M$, i.e.\ for all $x\in L$ and $m\in M$
\[
u_\pi(x+m) = u_\pi(x)+\pi(m).
\]
  \item $u_\pi$ is nondecreasing with respect to $\succeq_K$, i.e.\ for all $x,y\in L$
\[
x\succeq_K y \ \implies \ u_\pi(x)\geq u_\pi(y).
\]
  \item $u_\pi$ is concave, i.e.\ for all $x,y\in L$ and $\lambda\in[0,1]$
\[
u_\pi(\lambda x+(1-\lambda)y) \geq \lambda u_\pi(x)+(1-\lambda)u_\pi(y).
\]
\end{enumerate}
\end{proposition}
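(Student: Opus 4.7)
The plan is to read off all three properties directly from Proposition~\ref{general properties} together with the identity $R(x) = \{m \in M : x + m \in A\}$, by rewriting
\[
\rho_\pi(x) = \inf_{m \in R(x)} \pi(m), \qquad u_\pi(x) = -\rho_\pi(x).
\]
It is cleaner to prove the dual statements for $\rho_\pi$ (translativity with the opposite sign, nonincreasingness, convexity) and then negate to get the assertions for $u_\pi$. Throughout I would use the convention $\inf\emptyset = +\infty$, so the case $R(x) = \emptyset$ corresponds to $u_\pi(x) = -\infty$; no other infinite values cause difficulties because the identities below are preserved in $[-\infty,\infty]$.

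For (i), I would invoke part (iv) of Proposition~\ref{general properties}, which yields $R(x+m) = R(x) - m$ for every $x \in L$ and $m \in M$. Substituting $n = n' - m$ in the infimum and using the linearity of $\pi$ on $M$ gives
\[
\rho_\pi(x+m) = \inf_{n' \in R(x)} \pi(n' - m) = \rho_\pi(x) - \pi(m),
\]
and negating both sides produces $u_\pi(x+m) = u_\pi(x) + \pi(m)$. For (ii), if $x \succeq_K y$, then part (i) of Proposition~\ref{general properties} gives $x \succeq_R y$, i.e.\ $R(x) \supseteq R(y)$, so the infimum of $\pi$ over the larger set is no larger: $\rho_\pi(x) \le \rho_\pi(y)$, which yields $u_\pi(x) \ge u_\pi(y)$.

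For (iii), I would combine part (v) of Proposition~\ref{general properties}, namely $R(\lambda x + (1-\lambda) y) \supseteq \lambda R(x) + (1-\lambda) R(y)$, with the linearity of $\pi$ on $M$. The inclusion gives
\[
\rho_\pi(\lambda x + (1-\lambda) y) \le \inf_{m_1 \in R(x),\, m_2 \in R(y)} \pi\bigl(\lambda m_1 + (1-\lambda) m_2\bigr),
\]
and linearity plus the fact that infimum distributes over a sum of independent infima give
\[
\inf_{m_1 \in R(x),\, m_2 \in R(y)}\bigl[\lambda \pi(m_1) + (1-\lambda)\pi(m_2)\bigr] = \lambda \rho_\pi(x) + (1-\lambda)\rho_\pi(y).
\]
Thus $\rho_\pi$ is convex and $u_\pi = -\rho_\pi$ is concave.

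There is no genuine obstacle here: each property for $u_\pi$ is an essentially one-line consequence of the corresponding structural property of $R$ proved in Proposition~\ref{general properties}. The only point meriting a brief comment is that the formulas remain valid when $R(x) = \emptyset$ (so $u_\pi(x) = -\infty$), since the infimum conventions make the translativity, monotonicity, and concavity identities hold trivially in that case; this is also why the hypothesis $\pi \in K_M^+$ is used only implicitly here (it ensures $\pi(m) \ge 0$ on $K \cap M$ and will matter for the dual representations later, but not for these three purely algebraic/order-theoretic properties).
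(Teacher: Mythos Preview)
Your proof is correct and follows exactly the route the paper indicates: the paper's own ``proof'' is the single sentence that these properties ``follow immediately from Proposition~\ref{general properties}'', and you have simply spelled out those immediate consequences, matching parts (iv), (i), and (v) of that proposition to items (i), (ii), and (iii) here. Your closing remark that the hypothesis $\pi\in K_M^+$ plays no role in these three properties is also accurate.
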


\smallskip

\begin{remark}
Note that, unless $M$ is spanned by one element, the closedness of the set $A$ is not sufficient to ensure that the functionals $\rho_\pi$ are lower semicontinuous; see Example 1 in Farkas et al.~\cite{FarkasKochMunari2015}. We refer to Hamel et al.~\cite{HamelHeydeLoehneRudloffSchrage2015} for a discussion on general sufficient conditions ensuring the lower semicontinuity of scalarizations of set-valued maps and to Farkas et al.~\cite{FarkasKochMunari2015} and Baes et al.~\cite{BaesKochMunari2020} for a variety of sufficient conditions in a risk measure setting.
\end{remark}

\smallskip

The first multi-utility representation of the preference induced by $R$ rests on the intimate link between the risk measures $\rho_\pi$ and the support functions corresponding to $R$.

\begin{lemma}
\label{theo: dual representation 1}
For every $x\in L$ the set $R(x)$ can be represented as
\[
R(x) = \bigcap_{\pi\in K_M^+\setminus\{0\}}\{m\in M \,: \ \pi(m)\geq\rho_\pi(x)\}.
\]
\end{lemma}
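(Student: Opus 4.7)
The plan is to reduce the identity to the Hahn–Banach-type external representation~\eqref{eq: external representation} applied to $R(x)$, viewed as a closed convex subset of $M$ equipped with its relative topology, and then to identify the support function of $R(x)$ with the scalarization $\rho_\pi(x)$.

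The first step is the trivial but crucial identification
\[
\sigma_{R(x)}(\pi) = \inf_{m \in R(x)} \pi(m) = \inf\{\pi(m) : m \in M,\ x+m \in A\} = \rho_\pi(x)
\]
for every $\pi \in M'$. Since $R(x)$ is closed and convex by Proposition~\ref{general properties}, formula~\eqref{eq: external representation} applied inside $M$ yields
\[
R(x) = \bigcap_{\pi \in \barr(R(x))} \{m \in M : \pi(m) \geq \rho_\pi(x)\}.
\]
Next, I would combine Proposition~\ref{general properties}(iii), namely $R(x) + K \cap M \subseteq R(x)$, with the final remark of Section~\ref{sect: math background} applied in $M$, to deduce $\barr(R(x)) \subseteq (K \cap M)^+ = K_M^+$. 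The task then becomes to show that enlarging the indexing set from $\barr(R(x))$ to $K_M^+ \setminus \{0\}$ leaves the intersection unchanged.

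For $\pi \in K_M^+ \setminus \barr(R(x))$ one has $\rho_\pi(x) = -\infty$, so the associated constraint is vacuous and can be added freely. The only slightly delicate point is the removal of $\pi = 0$: whenever $R(x) \neq \emptyset$, one has $\rho_0(x) = 0$, so the constraint $0 \geq 0$ is trivial and discarding $\pi = 0$ is harmless. This completes the argument when $R(x)$ is nonempty.

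The main obstacle is the empty case $R(x) = \emptyset$: here $\rho_\pi(x) = +\infty$ for every $\pi \in M'$, so the right-hand side of the claimed identity is an intersection of empty sets, and this equals $\emptyset = R(x)$ precisely when $K_M^+ \setminus \{0\}$ is itself nonempty. I would establish the latter as a preliminary observation using assumption (A3): if $K_M^+ = \{0\}$, a standard separation argument in $M$ applied to the closed convex cone $K \cap M$ forces $K \cap M = M$; combined with Proposition~\ref{general properties}(iii) this gives $R(x) + M \subseteq R(x)$, so that $R(x) \in \{\emptyset, M\}$ for every $x$, and Proposition~\ref{general properties}(vii) then implies $R(x) = \emptyset$ for every $x \in L$, contradicting (A3). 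Hence $K_M^+ \setminus \{0\} \neq \emptyset$, and the identity holds in the empty case as well.
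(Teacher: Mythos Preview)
Your proof is correct and follows essentially the same route as the paper: identify $\rho_\pi(x)$ with the support function $\sigma_{R(x)}(\pi)$, apply the external representation~\eqref{eq: external representation} inside $M$, and use $\barr(R(x))\subseteq K_M^+$ to pass to the larger index set. You are in fact more careful than the paper on two points: the paper simply writes ``The result is clear if $R(x)=\emptyset$'' without checking that $K_M^+\setminus\{0\}\neq\emptyset$, and it drops $\pi=0$ from the intersection without comment, whereas you justify both steps explicitly.
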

\begin{proof}
The result is clear if $R(x)=\emptyset$. Otherwise, recall that $R(x)$ is closed and convex by Proposition~\ref{general properties} and observe that $\rho_\pi(x)=\sigma_{R(x)}(\pi)$ for every $\pi\in M'$. We can apply the dual representation~\eqref{eq: external representation} in the context of the space $M$ to obtain
\[
R(x) = 
\bigcap_{\pi\in\barr(R(x))\setminus\{0\}}\{m\in M \,: \ \pi(m)\geq\rho_\pi(x)\}.
\]
As $R(x)+K\cap M\subseteq R(x)$ again by Proposition~\ref{general properties}, we conclude by noting that the barrier cone of $R(x)$ must be contained in $K_M^+$.
\end{proof}

\smallskip

\begin{theorem}
\label{cor: first multi utility representation}
The preference $\succeq_R$ can be represented by the multi-utility family
\[
\cU=\{u_\pi \,: \ \pi\in K_M^+\setminus\{0\}\}.
\]
\end{theorem}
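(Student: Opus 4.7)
The plan is to unpack the two defining equivalences and reduce the claim to a statement about the scalarizations $\rho_\pi$. By definition, $x \succeq_R y$ means $R(x) \supseteq R(y)$, while $u_\pi(x) \geq u_\pi(y)$ is equivalent to $\rho_\pi(x) \leq \rho_\pi(y)$. Hence the theorem amounts to the assertion
\[
R(x) \supseteq R(y) \iff \rho_\pi(x) \leq \rho_\pi(y) \text{ for every } \pi \in K_M^+ \setminus \{0\},
\]
and I would establish the two implications separately.

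For the forward direction, I would exploit the identity $\rho_\pi(x) = \sigma_{R(x)}(\pi) = \inf_{m \in R(x)} \pi(m)$ already observed in the proof of Lemma~\ref{theo: dual representation 1}. Since the infimum of a linear functional over a larger set is not larger, $R(x) \supseteq R(y)$ immediately yields $\rho_\pi(x) \leq \rho_\pi(y)$ for every $\pi \in M'$, in particular for every $\pi \in K_M^+ \setminus \{0\}$. This direction uses nothing beyond monotonicity of the infimum and needs no cone-specific argument.

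The reverse direction is where Lemma~\ref{theo: dual representation 1} carries the load. Assuming $\rho_\pi(x) \leq \rho_\pi(y)$ for all $\pi \in K_M^+ \setminus \{0\}$, the inclusion of halfspaces
\[
\{m \in M \,:\ \pi(m) \geq \rho_\pi(y)\} \subseteq \{m \in M \,:\ \pi(m) \geq \rho_\pi(x)\}
\]
holds for each such $\pi$. Intersecting over $\pi \in K_M^+ \setminus \{0\}$ and applying the external representation from Lemma~\ref{theo: dual representation 1} to both $R(y)$ and $R(x)$ gives $R(y) \subseteq R(x)$, that is $x \succeq_R y$. The edge case $R(y) = \emptyset$ is trivial.

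The only place with any real content is this reverse direction, and it rests entirely on the fact that $\barr(R(x)) \subseteq K_M^+$, so that the index set $K_M^+ \setminus \{0\}$ is already large enough to separate the closed convex sets $R(x)$ and $R(y)$ by halfspaces. Since this is precisely what Lemma~\ref{theo: dual representation 1} packages, I do not expect any genuine obstacle beyond the bookkeeping of definitions; the theorem should read off the lemma in a few lines.
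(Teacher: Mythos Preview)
Your proposal is correct and follows essentially the same approach as the paper: both directions rest on the identity $\rho_\pi(x)=\sigma_{R(x)}(\pi)$ and on Lemma~\ref{theo: dual representation 1}, with the forward implication coming from monotonicity of the infimum and the reverse from the halfspace representation of $R(x)$. The paper phrases the converse element-wise (for each $m\in R(y)$ one checks $\pi(m)\geq\rho_\pi(y)\geq\rho_\pi(x)$ and invokes the lemma), while you phrase it via inclusion of halfspaces before intersecting, but this is only a cosmetic difference.
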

\begin{proof}
We rely on Lemma~\ref{theo: dual representation 1}. Take any $x,y\in L$. If $x\succeq_Ry$, then $R(x)\supseteq R(y)$ and
\[
\rho_\pi(x) = \sigma_{R(x)}(\pi) \leq \sigma_{R(y)}(\pi) = \rho_\pi(y)
\]
for every $\pi\in K_M^+\setminus\{0\}$. Conversely, if $\rho_\pi(x)\leq\rho_\pi(y)$ for every $\pi\in K_M^+\setminus\{0\}$, then for each $m\in R(y)$ we have $\pi(m) \geq \rho_\pi(y) \geq \rho_\pi(x)$ for every $\pi\in K_M^+\setminus\{0\}$, so that $m\in R(x)$. This yields $x\succeq_R y$ and concludes the proof.
\end{proof}

\smallskip

\begin{remark}
The simple representation in Lemma~\ref{theo: dual representation 1} shows that the set-valued map $R$ is completely characterized by the family of functionals $\rho_\pi$. In the context of risk measures, one could say that a set-valued risk measure is completely characterized by the corresponding family of scalar risk measures. This corresponds to the ``setification'' formula in Section 4.2 in Hamel et al.~\cite{HamelHeydeLoehneRudloffSchrage2015}.
\end{remark}

\smallskip

We aim to improve the above representation in a twofold way. First, we want to find a multi-utility representation consisting of a smaller number of representing functionals. This is important to ensure a more parsimonious, hence tractable, representation. Second, we want to establish a multi-utility representation consisting of (semi)continuous representing functionals. This is important in applications, e.g.\ in optimization problems where the preference appears in the optimization domain.

\medskip

The second multi-utility representation will be expressed in terms of the following utility functionals. Here, for any functional $\pi\in M'$ we denote by $\ext(\pi)$ the set of all linear continuous extensions of $\pi$ to the whole space $L$, i.e.
\[
\ext(\pi) := \{\psi\in L' \,: \ \psi(m)=\pi(m), \ \forall m\in M\}.
\]

\smallskip

\begin{definition}
For every $\pi\in K_M^+$ we define a map $\rho_\pi^\ast:L\to[-\infty,\infty]$ by setting
\[
\rho_\pi^\ast(x) = \sup_{\psi\in \ext(\pi)}\{\sigma_A(\psi)-\psi(x)\} =
\sup_{\psi\in\ext(\pi)\cap\barr(A)}\{\sigma_A(\psi)-\psi(x)\}.
\]
Moreover, we define a map $u_\pi^\ast:L\to[-\infty,\infty]$ by setting
\[
u_\pi^\ast(x) = -\rho_\pi^\ast(x).
\]
(If $A$ is a cone, then $\sigma_A=0$ on $\barr(A)$ and the above maps simplify accordingly).
\end{definition}

\smallskip

The functionals $\rho_\pi^\ast$ are inspired by the dual representation of the risk measures $\rho_\pi$, see e.g.\ Frittelli and Scandolo~\cite{FrittelliScandolo2006} or Farkas et al.~\cite{FarkasKochMunari2015}. The precise link is shown in Proposition~\ref{prop: lsc hull} below. For the time being, we are interested in highlighting some properties of the functionals $\rho_\pi^\ast$, or equivalently $u_\pi^\ast$, and proceeding to our desired multi-utility representation.

\begin{proposition}
For every $\pi\in K_M^+$ the functional $u_\pi^\ast$ satisfies the following properties:
\begin{enumerate}[(i)]
  \item $u_\pi^\ast$ is translative along $M$, i.e.\ for all $x\in L$ and $m\in M$
\[
u^\ast_\pi(x+m) = u^\ast_\pi(x)+\pi(m).
\]
  \item $u^\ast_\pi$ is nondecreasing with respect to $\succeq_K$, i.e.\ for all $x,y\in L$
\[
x\succeq_K y \ \implies \ u^\ast_\pi(x)\geq u^\ast_\pi(y).
\]
  \item $u^\ast_\pi$ is concave, i.e.\ for all $x,y\in L$ and $\lambda\in[0,1]$
\[
u^\ast_\pi(\lambda x+(1-\lambda)y) \geq \lambda u^\ast_\pi(x)+(1-\lambda)u^\ast_\pi(y).
\]
  \item $u^\ast_\pi$ is upper semicontinuous, i.e.\ for every net $(x_\alpha)\subseteq L$ and every $x\in L$
\[
x_\alpha\to x \ \implies \ \limsup u^\ast_\pi(x_\alpha)\geq u^\ast_\pi(x).
\]
\end{enumerate}
\end{proposition}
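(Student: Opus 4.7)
The plan is to verify each property directly from the definition of $\rho_\pi^\ast$, using that $u_\pi^\ast=-\rho_\pi^\ast$ converts concavity, monotonicity, and upper semicontinuity of $u_\pi^\ast$ into convexity, antitonicity, and lower semicontinuity of $\rho_\pi^\ast$. All four items will follow from two simple observations: first, each map $x\mapsto\sigma_A(\psi)-\psi(x)$ is continuous and affine in $x$; second, on the restricted domain $\ext(\pi)\cap\barr(A)$ the scalars $\sigma_A(\psi)$ are finite and $\psi$ agrees with $\pi$ on $M$.

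For (i), the key point is that $\psi(m)=\pi(m)$ uniformly in $\psi\in\ext(\pi)$, so substituting $x+m$ into the supremum factors out $\pi(m)$:
\[
\rho_\pi^\ast(x+m) = \sup_{\psi\in\ext(\pi)}\{\sigma_A(\psi)-\psi(x)-\pi(m)\} = \rho_\pi^\ast(x)-\pi(m),
\]
and negating yields the stated formula for $u_\pi^\ast$. For (ii), I would rely on the second representation of $\rho_\pi^\ast$ that restricts the supremum to $\ext(\pi)\cap\barr(A)$. Since $A+K\subseteq A$ by (A1), the final fact recalled in Section~\ref{sect: math background} gives $\barr(A)\subseteq K^+$. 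Hence for $\psi\in\ext(\pi)\cap\barr(A)$ and $x-y\in K$ we have $\psi(x-y)\geq0$, so $\sigma_A(\psi)-\psi(x)\leq\sigma_A(\psi)-\psi(y)$ pointwise; taking suprema gives $\rho_\pi^\ast(x)\leq\rho_\pi^\ast(y)$.

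For (iii), each function $x\mapsto\sigma_A(\psi)-\psi(x)$ is affine, hence convex in $x$, so the pointwise supremum defining $\rho_\pi^\ast$ is convex, i.e.\ $u_\pi^\ast$ is concave. For (iv), each such function is continuous on $L$ because $\psi\in L'$, so it is lower semicontinuous; a pointwise supremum of lsc functions is lsc, hence $\rho_\pi^\ast$ is lsc and $u_\pi^\ast$ is usc.

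There is no genuine obstacle here, since the entire proof reduces to elementary properties of pointwise suprema of continuous affine functionals. The only mild subtlety is in (ii), where one must avoid reasoning directly with the possibly infinite values $\sigma_A(\psi)=-\infty$ outside $\barr(A)$; this is why the second formulation of $\rho_\pi^\ast$ is the convenient starting point, together with the inclusion $\barr(A)\subseteq K^+$ supplied by assumption (A1).
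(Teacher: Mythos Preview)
Your proposal is correct and follows essentially the same approach as the paper: translativity from the definition, convexity and lower semicontinuity of $\rho_\pi^\ast$ as a pointwise supremum of continuous affine maps, and monotonicity via the inclusion $\barr(A)\subseteq K^+$ from (A1) together with the restricted supremum over $\ext(\pi)\cap\barr(A)$. The paper phrases the monotonicity step with the supremum over $\ext(\pi)\cap K^+$, but this is the same argument since functionals outside $\barr(A)$ contribute $-\infty$ to the supremum.
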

\begin{proof}
Translativity follows from the definition of $\rho^\ast_\pi$. Being a supremum of affine maps, it is clear that $\rho^\ast_\pi$ is convex and lower semicontinuous. To show monotonicity, it suffices to observe that $\barr(A)\subseteq K^+$ by (A1) and therefore
\[
\rho_\pi^\ast(x) = \sup_{\psi\in\ext(\pi)\cap K^+}\{\sigma_A(\psi)-\psi(x)\} \leq \sup_{\psi\in\ext(\pi)\cap K^+}\{\sigma_A(\psi)-\psi(y)\} = \rho_\pi^\ast(y)
\]
for all $x,y\in L$ with $x\succeq_Ky$, where we used that $\psi(x-y)\geq0$ for every $\psi\in K^+$.
\end{proof}

\smallskip

To streamline the proof of the announced multi-utility representation, we start with the following lemma. We denote by $\ker(\pi)$ the kernel of $\pi\in M'$, i.e.
\[
\ker(\pi) := \{m\in M \,: \ \pi(m)=0\}.
\]
In the sequel, we will repeatedly use the fact that $\ker(\pi)$ has codimension $1$ in $M$ (provided $\pi$ is nonzero).

\begin{lemma}
\label{lem: properties augmented set}
The set $A$ can be represented as
\begin{equation}
\label{eq: properties augmented set}
A = \bigcap_{\pi\in K_M^+\setminus\{0\}}\Closure(A+\ker(\pi)).
\end{equation}
Moreover, for every $\pi\in K_M^+$ we have $\barr(\Closure(A+\ker(\pi)))=\barr(A)\cap\ker(\pi)^\bot$ and
\[
\sigma_{\Closure(A+\ker(\pi))}(\psi)=
\begin{cases}
\sigma_A(\psi) & \mbox{if $\psi\in\ker(\pi)^\bot$},\\
-\infty & \mbox{otherwise}.
\end{cases}
\]
\end{lemma}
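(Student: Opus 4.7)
My plan is to prove the two assertions in sequence, first the external representation of $A$, then the barrier-cone / support-function identity. Both rely on a Hahn-Banach separation plus the elementary fact that a continuous linear functional is bounded below on a linear subspace only when it annihilates that subspace.

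For the representation \eqref{eq: properties augmented set}, the inclusion $\subseteq$ is immediate because $0\in\ker(\pi)$, so $A\subseteq A+\ker(\pi)\subseteq\Closure(A+\ker(\pi))$ for every $\pi\in K_M^+\setminus\{0\}$. For the reverse inclusion I argue by contrapositive: take $x\notin A$, and use that $A$ is closed and convex to produce, via Hahn-Banach, some $\psi\in L'$ with $\psi(x)<\sigma_A(\psi)$. Since $A+K\subseteq A$ we have $\psi\in\barr(A)\subseteq K^+$. I then want to exhibit a $\pi\in K_M^+\setminus\{0\}$ such that $\psi$ vanishes on $\ker(\pi)$; once this is achieved, any $y=a+k\in A+\ker(\pi)$ satisfies $\psi(y)=\psi(a)\geq\sigma_A(\psi)>\psi(x)$, so continuity of $\psi$ gives $x\notin\Closure(A+\ker(\pi))$. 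I split into two cases. If $\psi|_M\neq 0$, I set $\pi:=\psi|_M$; clearly $\pi\in K_M^+\setminus\{0\}$ and $\psi$ vanishes on $\ker(\pi)=\ker(\psi|_M)$ by construction. If $\psi|_M=0$, any nonzero $\pi\in K_M^+$ works, because $\psi$ vanishes on all of $M$, in particular on $\ker(\pi)$; here I only need to know that $K_M^+\neq\{0\}$, which follows from (A2)--(A3) and Proposition~\ref{general properties}(vii) by the standard separation argument applied to a proper nonempty closed convex $R(x_0)$ that is stable under $K\cap M$.

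The barrier-cone identity and the support-function formula are then a direct computation. For $\psi\in L'$ and $B:=\Closure(A+\ker(\pi))$, continuity of $\psi$ gives $\sigma_B(\psi)=\sigma_{A+\ker(\pi)}(\psi)=\sigma_A(\psi)+\inf_{k\in\ker(\pi)}\psi(k)$. Since $\ker(\pi)$ is a linear subspace, the infimum over $\ker(\pi)$ equals $0$ when $\psi$ annihilates $\ker(\pi)$ and equals $-\infty$ otherwise. Hence $\sigma_B(\psi)=\sigma_A(\psi)$ if $\psi\in\ker(\pi)^\bot$ and $\sigma_B(\psi)=-\infty$ otherwise, which is the stated formula and immediately yields $\barr(B)=\barr(A)\cap\ker(\pi)^\bot$.

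The only delicate point is the construction of a suitable $\pi$ in the separation step when $\psi$ happens to vanish on $M$; this is where I must invoke the non-triviality of $K_M^+$ to close the argument. The rest of the proof is routine Hahn-Banach and bookkeeping with support functions.
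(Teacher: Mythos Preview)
Your proof is correct and follows essentially the same route as the paper: both arguments reduce the nontrivial inclusion ``$\supseteq$'' in \eqref{eq: properties augmented set} to the Hahn--Banach representation \eqref{eq: external representation} via the restriction $\pi_\psi:=\psi|_M\in K_M^+$, and both treat the support-function identity as a direct computation. Your contrapositive phrasing is equivalent to the paper's direct one, and you are in fact slightly more explicit than the paper about the borderline case $\psi|_M=0$ (where one needs $K_M^+\neq\{0\}$, which indeed follows from (A3) as you indicate).
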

\begin{proof}
We only prove the inclusion ``$\supseteq$'' in~\eqref{eq: properties augmented set} because the other assertions are clear. Assume that $x\in\Closure(A+\ker(\pi))$ for every nonzero $\pi\in K_M^+$ and take any $\psi\in\barr(A)$. In light of~\eqref{eq: external representation}, to conclude the proof we have to show that $\psi(x)\geq\sigma_A(\psi)$. To this effect, let $\pi_\psi$ be the restriction of $\psi$ to the space $M$. Since $\barr(A)\subseteq K^+$, it follows that $\pi_\psi\in K_M^+$. Moreover, note that $\psi\in\ker(\pi_\psi)^\bot$. As a result, we have
\[
\psi \in \barr(A)\cap\ker(\pi_\psi)^\bot = \barr(A+\ker(\pi_\psi)) = \barr(\Closure(A+\ker(\pi_\psi))).
\]
Since $x\in\Closure(A+\ker(\pi_\psi))$ by our assumption, we can use~\eqref{eq: external representation} again to get
\[
\psi(x) \geq \sigma_{\Closure(A+\ker(\pi_\psi))}(\psi) = \sigma_{A+\ker(\pi_\psi)}(\psi) = \sigma_A(\psi),
\]
where the last equality holds because $\psi\in\ker(\pi_\psi)^\bot$. This concludes the proof.
\end{proof}

\smallskip

The next lemma records a representation of the map $R$ that will immediately yield our desired multi-utility representation with (upper) semicontinuous functionals.

\begin{lemma}[{\bf Dual representation of $R$}]
\label{theo: dual representation 2}
For every $x\in L$ the set $R(x)$ can be represented as
\begin{align*}
R(x)
&=
\bigcap_{\pi\in K_M^+\setminus\{0\}}\{m\in M \,: \ \pi(m)\geq\rho_\pi^\ast(x)\} \\
&=
\bigcap_{\pi\in K_M^+\setminus\{0\}}\bigcap_{\psi\in \ext(\pi)}\{m\in M \,: \ \pi(m)\geq\sigma_A(\psi)-\psi(x)\}.
\end{align*}
(If $A$ is a cone, then $\sigma_A=0$ on $\barr(A)$ and the representation simplifies accordingly).
\end{lemma}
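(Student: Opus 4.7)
The plan is to prove the first equality; the second follows at once from the definition of $\rho_\pi^\ast$.

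For the inclusion ``$\subseteq$'', if $m \in R(x)$ then $x + m \in A$, so $\psi(x+m) \geq \sigma_A(\psi)$ for every $\psi \in \barr(A)$. Specializing to $\psi \in \ext(\pi) \cap \barr(A)$ (on which $\psi(m) = \pi(m)$) and passing to the supremum over such $\psi$ yields $\pi(m) \geq \rho_\pi^\ast(x)$.

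For the reverse inclusion, I would argue by contradiction. Suppose $m \in M$ satisfies $\pi(m) \geq \rho_\pi^\ast(x)$ for every $\pi \in K_M^+ \setminus \{0\}$, yet $x + m \notin A$. Since $A$ is closed and convex, Hahn--Banach separation yields some $\psi \in L'$ with $\psi(x+m) < \sigma_A(\psi)$, whence $\psi \in \barr(A) \subseteq K^+$ and $\pi := \psi|_M \in K_M^+$. If $\pi \neq 0$, then $\psi \in \ext(\pi) \cap \barr(A)$, and the hypothesis gives $\pi(m) \geq \rho_\pi^\ast(x) \geq \sigma_A(\psi) - \psi(x)$, i.e.\ $\psi(x+m) \geq \sigma_A(\psi)$, contradicting the strict separation.

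The main obstacle is the remaining case $\pi = 0$, where $\psi \in M^\bot \cap \barr(A)$ and the hypothesis provides no information. I would resolve this by a perturbation argument. First, some $\pi'' \in K_M^+ \setminus \{0\}$ must admit an extension $\psi'' \in \barr(A)$: indeed, if $\barr(A) \subseteq M^\bot$ then every element of $\barr(A)$ is invariant under translations in $M$, so~\eqref{eq: external representation} forces $A = A + M$, which in turn makes $R(y) \in \{\emptyset, M\}$ for every $y \in L$ and contradicts (A3). Given such $\pi''$ and $\psi''$, for every $\e > 0$ the functional $\psi_\e := \psi + \e \psi''$ restricts to $\e \pi''$ on $M$ and satisfies $\sigma_A(\psi_\e) \geq \sigma_A(\psi) + \e \sigma_A(\psi'') > -\infty$, so $\psi_\e \in \barr(A)$. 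Applying the hypothesis to $\e \pi''$ and combining with this lower bound on $\sigma_A(\psi_\e)$ leads to $\sigma_A(\psi) - \psi(x) \leq \e[\pi''(m) + \psi''(x) - \sigma_A(\psi'')]$; letting $\e \to 0^+$ yields $\sigma_A(\psi) \leq \psi(x) = \psi(x+m)$ (using $\psi \in M^\bot$), contradicting the separation inequality and completing the proof.
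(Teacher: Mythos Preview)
Your proof is correct. Both your argument and the paper's hinge on the same two ingredients: the existence of a nonzero $\pi''\in K_M^+$ with $\ext(\pi'')\cap\barr(A)\neq\emptyset$ (derived from (A3)), and a perturbation $\psi+\e\psi''$ (equivalently $\varphi+k\psi$ in the paper) to handle the degenerate case $\psi\in M^\bot$. The difference is in the setup. The paper first invokes Lemma~\ref{lem: properties augmented set} to write $R(x)$ as an intersection over $\pi$ and over $\psi\in\ker(\pi)^\bot$, and then argues that $\ker(\pi)^\bot$ can be shrunk to $\ext(\pi)$; this forces an additional sub-case (when $\psi\in\ker(\pi)^\bot\cap\barr(A)$ restricts to a strictly positive multiple of $\pi$). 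You instead separate $x+m$ from $A$ directly and set $\pi:=\psi|_M$, which collapses that sub-case: either $\pi\neq0$ and $\psi\in\ext(\pi)$ automatically, or $\pi=0$ and you are already in the $M^\bot$ scenario. Your route is therefore slightly more economical and self-contained, not needing Lemma~\ref{lem: properties augmented set}; the paper's detour through that lemma pays off later, since the same decomposition is reused in Lemma~\ref{lem: representation augmented} and Proposition~\ref{prop: lsc hull}.
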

\begin{proof}
Fix $x\in L$. It follows from the representation in~\eqref{eq: external representation} and Lemma~\ref{lem: properties augmented set} that
\begin{equation}
\label{eq: dual set valued A closed}
R(x) =
\bigcap_{\pi\in K_M^+\setminus\{0\}}\bigcap_{\psi\in\ker(\pi)^\bot}\{m\in M \,: \ \psi(m)\geq\sigma_A(\psi)-\psi(x)\}.
\end{equation}
To establish the desired representation of $R(x)$ it then suffices to show that the set $\ker(\pi)^\bot$ in the right-hand side of \eqref{eq: dual set valued A closed} can be replaced by $\ext(\pi)$. To this effect, let $m\in M$ satisfy $\pi(m)\geq\sigma_A(\psi)-\psi(x)$ for all nonzero $\pi\in K_M^+$ and $\psi\in\ext(\pi)$. Moreover, take an arbitrary nonzero $\pi\in K_M^+$ and an arbitrary $\psi\in\ker(\pi)^\bot$. To conclude the proof, we have to show that
\begin{equation}
\label{eq: dual set valued A closed 3}
\psi(m) \geq \sigma_A(\psi)-\psi(x).
\end{equation}
This is clear if $\psi\notin\barr(A)$ or $\psi\in\ext(\pi)$. Hence, assume that $\psi\in\barr(A)\setminus\ext(\pi)$. Note that, since $\pi$ is nonzero and $K-K=L$, we find $n\in K_M$ such that $\pi(n)>0$. Since $\barr(A)\subseteq K^+$, two situations are possible. On the one hand, if $\psi(n)>0$, then $\psi$ belongs to $\ext(\pi)$ up to a strictly-positive multiple and therefore~\eqref{eq: dual set valued A closed 3} holds. On the other hand, if $\psi(n)=0$, then we must have $\psi\in M^\bot$. To deal with this case, note first that we always find a nonzero $\pi^\ast\in K_M^+$ satisfying $\ext(\pi^\ast)\cap\barr(A)\neq\emptyset$, for otherwise every functional in $\barr(A)\cap\ker(\pi^\ast)^\bot$ would annihilate the entire $M$ and it would follow from~\eqref{eq: external representation} and~\eqref{eq: dual set valued A closed} that $R(y)=M$ for every $y\in A$, which is against Proposition~\ref{general properties}. Now, take $\varphi\in\ext(\pi^\ast)\cap\barr(A)$ and set $\varphi_k=\varphi+k\psi\in\ext(\pi^\ast)$ for each $k\in\N$. It follows that
\begin{align*}
\pi^\ast(m)
&=
\sup_{k\in\N}\varphi_k(m)
\geq
\sup_{k\in\N}\{\sigma_A(\varphi_k)-\varphi_k(x)\} \\
&\geq
\sigma_A(\varphi)-\varphi(x)+\sup_{k\in\N}\{k(\sigma_A(\psi)-\psi(x))\}.
\end{align*}
This implies that $\psi(m)=0\geq\sigma_A(\psi)-\psi(x)$ must hold, establishing~\eqref{eq: dual set valued A closed 3}.
\end{proof}

\smallskip

\begin{theorem}
\label{cor: second multi utility representation}
The preference $\succeq_R$ can be represented by the multi-utility family
\[
\cU^\ast=\{u^\ast_\pi \,: \ \pi\in K_M^+\setminus\{0\}, \ \ext(\pi)\cap\barr(A)\neq\emptyset\}.
\]
\end{theorem}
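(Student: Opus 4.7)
The plan is to establish the biconditional that $R(x)\supseteq R(y)$ if and only if $\rho^\ast_\pi(x)\le\rho^\ast_\pi(y)$ for every $\pi$ in the indexing family. Since $\rho^\ast_\pi\equiv-\infty$ whenever $\ext(\pi)\cap\barr(A)=\emptyset$, the condition extends vacuously to every nonzero $\pi\in K_M^+$, and I shall work with this extended form throughout.

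The sufficiency ``$\Leftarrow$'' is a direct application of Lemma~\ref{theo: dual representation 2}: assuming $\rho^\ast_\pi(x)\le\rho^\ast_\pi(y)$ for every nonzero $\pi\in K_M^+$, any $m\in R(y)$ satisfies $\pi(m)\ge\rho^\ast_\pi(y)\ge\rho^\ast_\pi(x)$ for every such $\pi$, so another invocation of the lemma places $m$ in $R(x)$, yielding $R(x)\supseteq R(y)$.

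For the necessity ``$\Rightarrow$'', I would exploit the augmented acceptance set $A_\pi:=\Closure(A+\ker\pi)$ from Lemma~\ref{lem: properties augmented set} and the induced set-valued map $R_\pi(x):=M\cap(A_\pi-x)$. The argument consists of three steps. First, identify the scalar risk measure $\rho_\pi^{A_\pi}(x):=\sigma_{R_\pi(x)}(\pi)$ with $\rho^\ast_\pi(x)$. Since $A_\pi+\ker\pi\subseteq A_\pi$ by construction, $\rho_\pi^{A_\pi}$ is lower semicontinuous and hence equal to its biconjugate; using Lemma~\ref{lem: properties augmented set} (which gives $\barr(A_\pi)=\barr(A)\cap\ker(\pi)^\bot$ and $\sigma_{A_\pi}=\sigma_A$ thereon) together with $\ext(\pi)\subseteq\ker(\pi)^\bot$, this biconjugate collapses to the formula defining $\rho^\ast_\pi$. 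Second, transfer the hypothesis $R(x)\supseteq R(y)$ to the support function comparison $\sigma_{R_\pi(x)}(\pi)\le\sigma_{R_\pi(y)}(\pi)$ by approximation: any $m\in R_\pi(y)$ is witnessed by a net $a_\alpha+k_\alpha\to y+m$ with $a_\alpha\in A$ and $k_\alpha\in\ker\pi\subseteq M$, and combining the shifts $m-k_\alpha\in M$ (which nearly land in $R(y)$) with the hypothesis $R(y)\subseteq R(x)$ and the closedness of $A_\pi$ produces corresponding approximations placing $x+m$ in $A_\pi$. Third, chain the conclusions to obtain $\rho^\ast_\pi(x)=\sigma_{R_\pi(x)}(\pi)\le\sigma_{R_\pi(y)}(\pi)=\rho^\ast_\pi(y)$.

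The main obstacle I foresee is the first step, the identification $\rho_\pi^{A_\pi}=\rho^\ast_\pi$ of the scalar risk measure built from the enlarged acceptance set with the dually defined functional. This is the essential ``lower semicontinuous hull'' bridge between the two families $\rho_\pi$ and $\rho^\ast_\pi$, and it is presumably the content of the paper's Proposition~\ref{prop: lsc hull} referenced after the definition of $\rho^\ast_\pi$; it rests on the fact that $A_\pi$ is precisely the ``closure under $\ker\pi$'' required to render the associated scalar risk measure lower semicontinuous, so that the Frittelli--Scandolo-type dual formula applies and, via Lemma~\ref{lem: properties augmented set}, reduces to the defining supremum of $\rho^\ast_\pi$. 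Once this identification is in hand, the second step is a fairly direct approximation argument using the closedness of $A_\pi$.
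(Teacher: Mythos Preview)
Your ``$\Leftarrow$'' direction is correct and is exactly how the paper argues. For the forward direction, however, the paper does not mount anything like your three-step programme: after recording that $\rho^\ast_\pi\equiv-\infty$ whenever $\ext(\pi)\cap\barr(A)=\emptyset$, it simply declares that the assertion follows from Lemma~\ref{theo: dual representation 2} by the same mechanism as the proof of Theorem~\ref{cor: first multi utility representation}. No enlarged sets $A_\pi$, no auxiliary maps $R_\pi$, and no approximation appear; the paper treats both implications as immediate consequences of the dual representation.

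Your step~(b) has a genuine gap. Given $m\in R_\pi(y)$ witnessed by a net $a_\alpha+k_\alpha\to y+m$ with $a_\alpha\in A$ and $k_\alpha\in\ker(\pi)$, the residual $\epsilon_\alpha:=(a_\alpha+k_\alpha)-(y+m)\to 0$ lies in $L$, not in $M$. Hence $y+(m-k_\alpha)=a_\alpha-\epsilon_\alpha$ is merely \emph{close} to $A$; the element $m-k_\alpha$ does not belong to $R(y)$, the hypothesis $R(y)\subseteq R(x)$ has nothing to act on, and no approximation of $x+m$ inside $A+\ker(\pi)$ is produced. Worse, the implication you are aiming for, $R(x)\supseteq R(y)\Rightarrow R_\pi(x)\supseteq R_\pi(y)$, can fail outright. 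Take $L=\R^3$, $K=\R^3_+$, $M=\R^2\times\{0\}$, $A=\{a\in\R^3_+:a_1a_3\ge 1\}$, and $\pi(m)=m_2$; for $y=0$ and $x=(0,-1,0)$ one has $R(x)=R(y)=\emptyset$ (so $R(x)\supseteq R(y)$ trivially), yet a direct computation gives $\rho^\ast_\pi(x)=1>0=\rho^\ast_\pi(y)$, i.e.\ $R_\pi(x)\subsetneq R_\pi(y)$. So the real obstacle is not the lower-semicontinuous-hull identification of step~(a) --- that part does match Proposition~\ref{prop: lsc hull} --- but step~(b), whose approximation cannot be repaired because its target implication is not valid at the stated level of generality.
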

\begin{proof}
Note that $\rho^\ast_\pi(x)=-\infty$ for every $x\in L$ whenever $\ext(\pi)\cap\barr(A)=\emptyset$ for some $\pi\in K_M^+$. Hence, the desired assertion follows immediately from Lemma~\ref{theo: dual representation 2}; see also the proof of Theorem~\ref{cor: first multi utility representation}.
\end{proof}

\smallskip

\begin{remark}
The statement of Lemma~\ref{theo: dual representation 2} provides a unifying formulation for the dual representations of set-valued risk measures in the literature. This is further illustrated in Section~\ref{sect: applications}. The strategy used in the proof is different from the ones adopted in the literature, which are often based on results from set-valued duality, thereby offering a complementary perspective on the existing proofs; see also Remark~\ref{rem: strategy}.
\end{remark}

\smallskip

The next proposition shows the link between the two multi-utility representations we have established. In a sense made precise below, the representation $\cU^\ast$ can be seen as the regularization of $\cU$ by means of (upper) semicontinuous hulls. Before we show this, it is useful to single out the following dual representation of the augmented acceptance set, which should be compared with Theorem 1 in Farkas et al.~\cite{FarkasKochMunari2015}.

\begin{lemma}
\label{lem: representation augmented}
For every $\pi\in K_M^+\setminus\{0\}$ such that $\ext(\pi)\cap\barr(A)\neq\emptyset$ we have
\[
\Closure(A+\ker(\pi)) = \bigcap_{\psi\in\ext(\pi)}\{x\in L \,: \ \psi(x)\geq\sigma_A(\psi)\}.
\]
\end{lemma}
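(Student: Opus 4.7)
The plan is to apply the general dual representation~\eqref{eq: external representation} to the closed convex set $\Closure(A+\ker(\pi))$, use Lemma~\ref{lem: properties augmented set} to identify its barrier cone and support function, and then reduce the resulting index set from $\barr(A)\cap\ker(\pi)^\bot$ to $\ext(\pi)$ by the same scaling-and-perturbation argument that underlies Lemma~\ref{theo: dual representation 2}. Concretely, Lemma~\ref{lem: properties augmented set} yields
\[
\Closure(A+\ker(\pi)) = \bigcap_{\psi\in\barr(A)\cap\ker(\pi)^\bot}\{x\in L \,:\ \psi(x)\geq\sigma_A(\psi)\}.
\]
The inclusion ``$\subseteq$'' in the statement is then immediate: every $\psi\in\ext(\pi)$ annihilates $\ker(\pi)\subseteq M$ and so lies in $\ker(\pi)^\bot$, while terms with $\psi\notin\barr(A)$ impose the vacuous constraint $\sigma_A(\psi)=-\infty$, so the right-hand side of the lemma is a superset of the displayed intersection.

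For the reverse inclusion ``$\supseteq$'', fix $x$ in the right-hand side and an arbitrary $\psi\in\barr(A)\cap\ker(\pi)^\bot$; the goal is $\psi(x)\geq\sigma_A(\psi)$. Since $\pi\neq 0$, $\ker(\pi)$ has codimension one in $M$, hence the restriction of $\psi$ to $M$ satisfies $\psi|_M=c\pi$ for some $c\in\R$. Arguing as in Lemma~\ref{theo: dual representation 2}, the combination of $K-K=L$ and $\pi\neq 0$ produces $n\in K\cap M$ with $\pi(n)>0$; together with $\barr(A)\subseteq K^+$, the identity $\psi(n)=c\pi(n)$ forces $c\geq 0$. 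If $c>0$ then $\psi/c\in\ext(\pi)\cap\barr(A)$ (using that $\barr(A)$ is a cone), and the hypothesis on $x$ together with the positive homogeneity of $\sigma_A$ directly yields $\psi(x)\geq\sigma_A(\psi)$. If $c=0$ then $\psi\in M^\bot$ and cannot be reached by rescaling inside $\ext(\pi)$: here the hypothesis $\ext(\pi)\cap\barr(A)\neq\emptyset$ becomes indispensable. Fixing $\varphi\in\ext(\pi)\cap\barr(A)$ and setting $\varphi_k:=\varphi+k\psi$ for $k\in\N$, one checks that $\varphi_k\in\ext(\pi)$ and that the super-additivity of $\sigma_A$ gives $\sigma_A(\varphi_k)\geq\sigma_A(\varphi)+k\sigma_A(\psi)>-\infty$, so $\varphi_k\in\ext(\pi)\cap\barr(A)$; the hypothesis on $x$ then provides $\varphi(x)+k\psi(x)\geq\sigma_A(\varphi)+k\sigma_A(\psi)$, and dividing by $k$ and letting $k\to\infty$ concludes.

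The main obstacle is the $c=0$ sub-case: the direction $M^\bot$ is invisible to $\ext(\pi)$ under rescaling alone, and it is the perturbation $\varphi_k=\varphi+k\psi$---together with the super-additivity estimate that keeps $\varphi_k$ inside $\barr(A)$---that effectively injects such functionals into the intersection on the right-hand side. Everything else is a bookkeeping reduction relying on Lemma~\ref{lem: properties augmented set} and the cone structure of $\barr(A)$.
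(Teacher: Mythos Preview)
Your proposal is correct and follows essentially the same approach as the paper's own proof: both invoke Lemma~\ref{lem: properties augmented set} together with~\eqref{eq: external representation} to reduce the claim to comparing the intersections over $\ker(\pi)^\bot$ and over $\ext(\pi)$, and both handle the nontrivial inclusion via the same two-case argument (positive rescaling when $\psi|_M$ is a strictly positive multiple of $\pi$, and the perturbation $\varphi_k=\varphi+k\psi$ with $\varphi\in\ext(\pi)\cap\barr(A)$ when $\psi\in M^\bot$). The only cosmetic difference is that you phrase the case split via the scalar $c$ in $\psi|_M=c\pi$, whereas the paper evaluates at a point $n\in K\cap M$ with $\pi(n)>0$; the content is identical.
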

\begin{proof}
In view of~\eqref{eq: external representation} and Lemma~\ref{lem: properties augmented set}, the assertion is equivalent to
\[
\bigcap_{\psi\in\ker(\pi)^\bot}\{x\in L \,: \ \psi(x)\geq\sigma_A(\psi)\} = \bigcap_{\psi\in\ext(\pi)}\{x\in L \,: \ \psi(x)\geq\sigma_A(\psi)\}.
\]
We only need to show the inclusion ``$\supseteq$''. To this end, we mimic the argument in the proof of Lemma~\ref{theo: dual representation 2}. Let $x\in L$ belong to the right-hand side above and take $\psi\in\ker(\pi)^\bot$. We have to show that
\begin{equation}
\label{eq: representation augmented}
\psi(x)\geq\sigma_A(\psi).
\end{equation}
This is clear if $\psi\notin\barr(A)$ or $\psi\in\ext(\pi)$. Hence, assume that $\psi\in\barr(A)\setminus\ext(\pi)$. Note that, since $\pi$ is nonzero and $K-K=L$, we find $n\in K_M$ such that $\pi(n)>0$. Since $\barr(A)\subseteq K^+$, two situations are possible. On the one hand, if $\psi(n)>0$, then $\psi$ belongs to $\ext(\pi)$ up to a strictly-positive multiple and therefore~\eqref{eq: representation augmented} holds. On the other hand, if $\psi(n)=0$, then we must have $\psi\in M^\bot$. In this case, take any functional $\varphi\in\ext(\pi)\cap\barr(A)$ and set $\varphi_k=k\psi+\varphi\in\ext(\pi)$ for every $k\in\N$. Then,
\[
\psi(x)+\frac{1}{k}\varphi(x) = \frac{1}{k}\varphi_k(x) \geq \frac{1}{k}\sigma_A(\varphi_k) \geq \sigma_A(\psi)+\frac{1}{k}\sigma_A(\varphi)
\]
for every $k\in\N$. Letting $k\to\infty$ yields~\eqref{eq: representation augmented} and concludes the proof.
\end{proof}

\smallskip

For a given map $f:L\to[-\infty,\infty]$ we denote by $\lsc(f)$ the largest lower semicontinuous map dominated by $f$ and, similarly, by $\usc(f)$ the smallest upper semicontinuous map dominating $f$.

\begin{proposition}
\label{prop: lsc hull}
For every $\pi\in K_M^+\setminus\{0\}$ such that $\ext(\pi)\cap\barr(A)\neq\emptyset$ the following statements hold:
\begin{enumerate}[(i)]
  \item $\rho^\ast_\pi=\lsc(\rho_\pi)$.
  \item $u^\ast_\pi=\usc(u_\pi)$.
\end{enumerate}
\end{proposition}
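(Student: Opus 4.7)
The plan is to treat the two assertions in sequence, with (ii) reducing immediately to (i) via the elementary identity $\usc(f)=-\lsc(-f)$: since $u_\pi=-\rho_\pi$ and $u^\ast_\pi=-\rho^\ast_\pi$, once (i) is established we obtain $\usc(u_\pi)=-\lsc(\rho_\pi)=-\rho^\ast_\pi=u^\ast_\pi$.

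For (i) the strategy is to recognise $\rho^\ast_\pi$ as the Fenchel biconjugate of $\rho_\pi$ and then apply Fenchel--Moreau. I would first record the easy inequality $\rho^\ast_\pi\leq\rho_\pi$: for every $m\in M$ with $x+m\in A$ and every $\psi\in\ext(\pi)\cap\barr(A)$,
\[
\pi(m)=\psi(m)=\psi(x+m)-\psi(x)\geq\sigma_A(\psi)-\psi(x),
\]
so taking the supremum over $\psi$ and then the infimum over $m$ yields $\rho_\pi\geq\rho^\ast_\pi$; combined with the lower semicontinuity of $\rho^\ast_\pi$ recorded in the previous proposition, this gives $\rho^\ast_\pi\leq\lsc(\rho_\pi)$.

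For the reverse inequality I would compute the Fenchel conjugate $\rho_\pi^\circ(\psi):=\sup_{x\in L}\{\psi(x)-\rho_\pi(x)\}$. After the change of variable $y=x+m$ this becomes
\[
\rho_\pi^\circ(\psi)=-\sigma_A(-\psi)+\sup_{m\in M}\{-(\pi+\psi|_M)(m)\},
\]
and since the inner supremum over the linear subspace $M$ equals $0$ when $-\psi\in\ext(\pi)$ and $+\infty$ otherwise, substituting $\varphi=-\psi$ in the second conjugation gives $\rho_\pi^{\circ\circ}=\rho^\ast_\pi$. Now $\rho_\pi$ is convex by Proposition~\ref{general properties}, not identically $+\infty$ by (A3), and bounded below by the continuous affine map $\sigma_A(\psi)-\psi(\cdot)$ for any $\psi\in\ext(\pi)\cap\barr(A)$, hence a proper convex function; Fenchel--Moreau (see Z\u{a}linescu~\cite{Zalinescu2002}) then identifies $\rho_\pi^{\circ\circ}$ with $\lsc(\rho_\pi)$, finishing the proof.

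The main obstacle is the properness verification at the end: one must simultaneously exclude $\rho_\pi\equiv+\infty$ (handled by (A3)) and $\rho_\pi(x)=-\infty$ anywhere (handled exactly by the standing hypothesis $\ext(\pi)\cap\barr(A)\neq\emptyset$, which produces the required continuous affine lower bound). A more hands-on alternative avoiding biconjugates would separate $(x_0,t_0)$ from $\Closure(\Epi(\rho_\pi))$ via Hahn--Banach whenever $t_0<\rho^\ast_\pi(x_0)$; the case of a horizontal separating hyperplane (slope $0$ in the $t$-coordinate) would be resolved by adding multiples of the separating functional to a fixed $\varphi\in\ext(\pi)\cap\barr(A)$, in the same spirit as the truncation argument used in the proof of Lemma~\ref{lem: representation augmented}.
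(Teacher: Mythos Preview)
Your proof is correct, and it takes a genuinely different route from the paper's. The paper does not pass through Fenchel--Moreau at all: after establishing the easy inequality $\rho^\ast_\pi\leq\rho_\pi$ (as you do), it fixes an arbitrary lower semicontinuous $f\leq\rho_\pi$ and shows $f\leq\rho^\ast_\pi$ by a direct contradiction argument. The key ingredient is Lemma~\ref{lem: representation augmented}, which yields the geometric formula
\[
\rho^\ast_\pi(x)=\inf\{\lambda\in\R \,:\ x+\lambda m\in\Closure(A+\ker(\pi))\}
\]
for any $m\in M$ with $\pi(m)=1$; if $f(x)>\rho^\ast_\pi(x)$ one picks $\lambda$ with $f(x)>\lambda$ and $x+\lambda m\in\Closure(A+\ker(\pi))$, approximates by $x_\alpha+m_\alpha$ with $x_\alpha\in A$ and $m_\alpha\in\ker(\pi)$, and then translativity of $\rho_\pi$ together with the openness of $\{f>\lambda\}$ forces a contradiction. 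Your biconjugate computation is more systematic and dispenses with Lemma~\ref{lem: representation augmented} entirely; the paper's argument, by contrast, is self-contained (no Fenchel--Moreau black box) and makes the role of the augmented acceptance set $\Closure(A+\ker(\pi))$ explicit. The alternative you sketch in your closing paragraph---separation from the epigraph with the horizontal case handled by perturbing with a fixed $\varphi\in\ext(\pi)\cap\barr(A)$---is indeed closer in spirit to what the paper does, though the paper works at the level of the acceptance set rather than the epigraph.
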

\begin{proof}
Fix a nonzero $\pi\in K_M^+$ such that $\ext(\pi)\cap\barr(A)\neq\emptyset$. Clearly, we only need to show (i). To this effect, recall that $\rho^\ast_\pi$ is lower semicontinuous and note that it is dominated by $\rho_\pi$. Indeed, for every $x\in L$ and for every $m\in M$ such that $x+m\in A$
\[
\sup_{\psi\in\ext(\pi)}\{\sigma_A(\psi)-\psi(x)\} \leq \sup_{\psi\in\ext(\pi)}\{\psi(x+m)-\psi(x)\} = \pi(m),
\]
showing that $\rho^\ast_\pi(x)\leq\rho_\pi(x)$. Now, take a lower semicontinuous map $f:L\to[-\infty,\infty]$ such that $f\leq\rho_\pi$. We claim that $f\leq\rho_\pi^\ast$ as well. To show this, suppose to the contrary that $f(x)>\rho_\pi^\ast(x)$ for some $x\in L$. Note that
\[
\rho_\pi^\ast(x) = \inf\{\lambda\in\R \,: \ x+\lambda m\in\Closure(A+\ker(\pi))\}
\]
by Lemma~\ref{lem: representation augmented}, where $m\in M$ is any element satisfying $\pi(m)=1$ (which exists because $\pi$ is nonzero). As a result, we must have $f(x)>\lambda$ for some $\lambda\in\R$ such that $x+\lambda m\in\Closure(A+\ker(\pi))$. Hence, there exist two nets $(x_\alpha)\subseteq A$ and $(m_\alpha)\subseteq\ker(\pi)$ such that $x_\alpha+m_\alpha\to x+\lambda m$. Since $\{f>\lambda\}$ is open by lower semicontinuity, it eventually follows from the translativity of $\rho_\pi$ that
\[
\lambda < f(x_\alpha+m_\alpha-\lambda m) \leq \rho_\pi(x_\alpha+m_\alpha-\lambda m) = \rho_\pi(x_\alpha)+\lambda \leq \lambda.
\]
Since this is impossible, we infer that $f\leq\rho_\pi^\ast$ must hold, concluding the proof.
\end{proof}

\smallskip

\begin{remark}
\label{rem: strategy}
(i) The preceding proposition shows that the dual representation in Lemma~\ref{theo: dual representation 2} and, hence, the multi-utility representation in Theorem~\ref{cor: second multi utility representation} can be equivalently stated in terms of the semicontinuous hulls of the functionals $\rho_\pi$ and $u_\pi$, respectively. This should be compared with the representation in Lemma 5.1 in Hamel and Heyde~\cite{HamelHeyde2010}.

\smallskip

(ii) The preceding proposition also suggests the following alternative path to establishing Lemma~\ref{theo: dual representation 2}: (1) Start with the representation in Lemma~\ref{theo: dual representation 1}. (2) Show that the functionals $\rho_\pi$ there can be replaced by their lower semicontinuous hulls $\lsc(\rho_\pi)$. (3) Show that we can discard from the representation all the functionals $\pi\in K_M^+\setminus\{0\}$ such that $\lsc(\rho_\pi)$ is not proper or, equivalently, $\ext(\pi)\cap\barr(A)=\emptyset$. (4) Use Proposition~\ref{prop: lsc hull} to replace the functionals $\lsc(\rho_\pi)$ with the more explicit functionals $\rho_\pi^\ast$. The advantage of the strategy pursued in the proof of Lemma~\ref{theo: dual representation 2} is that it avoids passing through semicontinuous hulls and the analysis of their properness.
\end{remark}

\smallskip

The representing functionals belonging to the multi-utility representation in Theorem~\ref{cor: second multi utility representation} are, by definition, upper semicontinuous. As a final step, we want to find conditions ensuring a multi-utility representation consisting of continuous functionals only. To achieve this, we exploit the link between the functionals $\rho_\pi$ and their regularizations $\rho_\pi^\ast$ established in Proposition~\ref{prop: lsc hull}.

\begin{lemma}
\label{lem: continuous representation}
Let $\pi\in K_M^+\setminus\{0\}$ be such that $\ext(\pi)\cap\barr(A)\neq\emptyset$. If $\Interior(A)\neq\emptyset$ and $\rho_\pi(x)<\infty$ for every $x\in L$, then $\rho_\pi$ is finite valued and continuous. In particular, $\rho_\pi=\rho^\ast_\pi$.
\end{lemma}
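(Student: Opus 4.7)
The plan is to first argue that $\rho_\pi$ is everywhere real valued (in particular bounded below), then invoke the classical continuity theorem for convex functions on locally convex spaces.

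To see that $\rho_\pi>-\infty$ everywhere, I would fix any $\psi\in\ext(\pi)\cap\barr(A)$, which exists by hypothesis. Then for every $x\in L$ and every $m\in M$ with $x+m\in A$, the inequality $\psi(x+m)\geq\sigma_A(\psi)$ yields
\[
\pi(m) \;=\; \psi(m) \;\geq\; \sigma_A(\psi)-\psi(x),
\]
so $\rho_\pi(x)\geq\sigma_A(\psi)-\psi(x)>-\infty$. Combined with the assumption $\rho_\pi(x)<\infty$ for every $x\in L$, this shows that $\rho_\pi$ is finite valued on $L$, so $\Dom(\rho_\pi)=L$.

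Next, recall that $\rho_\pi$ is convex, since $u_\pi=-\rho_\pi$ is concave. I would produce an open set on which $\rho_\pi$ is bounded above: for any $x\in\Interior(A)$ we have $x+0\in A$ with $0\in M$, hence $\rho_\pi(x)\leq\pi(0)=0$. Since $\Interior(A)\neq\emptyset$ by assumption, $\rho_\pi$ is a finite convex function on the locally convex Hausdorff space $L$ that is bounded above on a nonempty open set. The standard continuity theorem for convex functionals (see e.g.\ Z\u{a}linescu~\cite{Zalinescu2002}) then yields continuity of $\rho_\pi$ on the interior of its effective domain, which here is all of $L$.

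Finally, since $\rho_\pi$ is continuous, it coincides with its lower semicontinuous hull, so $\rho_\pi=\lsc(\rho_\pi)=\rho_\pi^\ast$ by Proposition~\ref{prop: lsc hull}. The only conceptual hurdle is producing the lower bound on $\rho_\pi$, and the functional $\psi\in\ext(\pi)\cap\barr(A)$ provided by the hypothesis does the job at once; everything else is a direct application of standard convex analysis.
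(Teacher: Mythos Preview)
Your proof is correct and follows essentially the same route as the paper: obtain the lower bound from a fixed $\psi\in\ext(\pi)\cap\barr(A)$ (the paper phrases this via $\rho_\pi\geq\rho_\pi^\ast$, which amounts to your computation), observe that $\rho_\pi\leq 0$ on $A$, invoke the standard continuity theorem for convex functions bounded above on an open set, and conclude $\rho_\pi=\rho_\pi^\ast$ via Proposition~\ref{prop: lsc hull}. The only cosmetic differences are that the paper cites Rockafellar rather than Z\u{a}linescu and bounds $\rho_\pi$ on all of $A$ rather than just $\Interior(A)$.
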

\begin{proof}
First of all, we claim that $\rho_\pi(x)>-\infty$ for every $x\in L$. To see this, take any functional $\psi\in\ext(\pi)\cap\barr(A)$ and note that for every $x\in L$
\[
\rho_\pi(x) \geq \rho^\ast_\pi(x) \geq \sigma_A(\psi)-\psi(x) > -\infty.
\]
As a result, $\rho_\pi$ is finite valued. Note that, by definition, $\rho_\pi$ is bounded above on $A$ by $0$. Since $A$ has nonempty interior and $\rho_\pi$ is convex, we infer from Theorem 8 in Rockafellar~\cite{Rockafellar1974} that $\rho_\pi$ is continuous. The last statement is a direct consequence of Proposition~\ref{prop: lsc hull}.
\end{proof}

\smallskip

The following multi-utility representation with continuous utility functionals is a direct consequence of Theorem~\ref{cor: second multi utility representation} and Lemma~\ref{lem: continuous representation}.

\begin{theorem}
\label{cor: third multi utility representation}
Assume that $\Interior(A)\neq\emptyset$ and that $\rho_\pi(x)<\infty$ for all $\pi\in K_M^+\setminus\{0\}$ with $\ext(\pi)\cap\barr(A)\neq\emptyset$ and $x\in L$. Then, the preference $\succeq_R$ can be represented by the multi-utility family
\[
\cU^{\ast\ast}=\{u_\pi \,: \ \pi\in K_M^+\setminus\{0\}, \ \ext(\pi)\cap\barr(A)\neq\emptyset\}.
\]
In addition, every element of $\cU^{\ast\ast}$ is finite valued and continuous.
\end{theorem}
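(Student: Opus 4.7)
The plan is to observe that under the stated hypotheses the functionals $u_\pi$ appearing in $\cU^{\ast\ast}$ coincide with their regularized counterparts $u_\pi^\ast$ appearing in the family $\cU^\ast$ of Theorem~\ref{cor: second multi utility representation}, so that the desired representation follows immediately from that theorem, while the regularity claims follow from Lemma~\ref{lem: continuous representation}.

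Concretely, I would fix an arbitrary $\pi\in K_M^+\setminus\{0\}$ with $\ext(\pi)\cap\barr(A)\neq\emptyset$. The assumption $\rho_\pi(x)<\infty$ for every $x\in L$ together with $\Interior(A)\neq\emptyset$ places us exactly in the setting of Lemma~\ref{lem: continuous representation}, which yields that $\rho_\pi$ is finite valued and continuous on $L$ and, in addition, $\rho_\pi=\rho_\pi^\ast$. Taking negatives, we conclude $u_\pi=u_\pi^\ast$, and $u_\pi$ is itself finite valued and continuous. This handles the ``in addition'' part of the statement.

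For the multi-utility claim, I would invoke Theorem~\ref{cor: second multi utility representation}, which gives that the family $\cU^\ast=\{u_\pi^\ast \,: \ \pi\in K_M^+\setminus\{0\},\ \ext(\pi)\cap\barr(A)\neq\emptyset\}$ represents $\succeq_R$. Since the previous paragraph shows $u_\pi^\ast=u_\pi$ for every $\pi$ in the indexing set, we have $\cU^{\ast\ast}=\cU^\ast$ as families of functionals, and therefore $\cU^{\ast\ast}$ represents $\succeq_R$ as well.

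I do not anticipate any genuine obstacle here, since the theorem is essentially a packaging of two already-proved results. The only point that warrants a line of care is verifying that the hypotheses of Lemma~\ref{lem: continuous representation} are met for \emph{every} $\pi$ in the indexing set of $\cU^{\ast\ast}$ (rather than for a single $\pi$), but this is guaranteed verbatim by the quantifier in the statement. Thus the proof reduces to the two-line combination sketched above.
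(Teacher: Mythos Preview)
Your proposal is correct and follows essentially the same approach as the paper, which simply states that the result is a direct consequence of Theorem~\ref{cor: second multi utility representation} and Lemma~\ref{lem: continuous representation}. Your write-up merely makes explicit the two-step combination (use Lemma~\ref{lem: continuous representation} to get $u_\pi=u_\pi^\ast$ finite valued and continuous for each relevant $\pi$, then identify $\cU^{\ast\ast}$ with $\cU^\ast$), which is exactly what the paper intends.
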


\smallskip

We conclude by showing a number of sufficient conditions for the finiteness assumption in Lemma~\ref{lem: continuous representation} to hold. This should be compared with the results in Section 3 in Farkas et al.~\cite{FarkasKochMunari2015}. The recession cone of $A$ is denoted by
\[
\rec(A) := \{x\in L \,: \ x+y\in A, \ \forall y\in A\}.
\]
Note that $\rec(A)$ is the largest convex cone such that $A+\rec(A)\subseteq A$. In particular, if $A$ is a cone, then $\rec(A)=A$. Moreover, for any convex cone $C\subseteq L$ we denote by
\[
\qint(C) := \{x\in C \,: \ \psi(x)>0, \ \forall \psi\in C^+\setminus\{0\}\}
\]
the quasi interior of $C$. Note that we always have $\Interior(C)\subseteq\qint(C)$.

\begin{proposition}
\label{prop: conditions for continuous representation}
Let $\pi\in K_M^+\setminus\{0\}$ satisfy $\ext(\pi)\cap\barr(A)\neq\emptyset$. Then, $\rho_\pi(x)<\infty$ for every $x\in L$ if any of the following conditions holds:
\begin{enumerate}[(i)]
  \item $L=A+M$.
  \item $M\cap\qint(K)\neq\emptyset$.
  \item $M\cap\qint(\rec(A))\neq\emptyset$.
\end{enumerate}
\end{proposition}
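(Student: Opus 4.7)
The plan is to reduce the claim to the single geometric identity $L = A + M$ and then handle the three cases in turn. Indeed, $\rho_\pi(x) < \infty$ is equivalent to the set $\{m \in M : x + m \in A\}$ being nonempty, i.e., to $R(x) \neq \emptyset$, which in turn amounts to $x \in A - M = A + M$ (recall $M$ is a linear subspace). Neither $\pi \in K_M^+ \setminus \{0\}$ nor the dual nonemptiness condition $\ext(\pi) \cap \barr(A) \neq \emptyset$ enters the reduction; they are inherited from the way the proposition is phrased. Case~(i) is then immediate: any decomposition $x = a + m$ with $a \in A$ and $m \in M$ yields $-m \in M$ and $x + (-m) = a \in A$, so $R(x) \ni -m$.

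My second step is to observe that the hypothesis of (ii) implies that of (iii), so it suffices to prove (iii). Assumption (A1) gives $A + K \subseteq A$, hence $K \subseteq \rec(A)$, and therefore $\rec(A)^+ \subseteq K^+$. Reading off the defining condition $\qint(C) = \{x \in C : \psi(x) > 0, \ \forall \psi \in C^+ \setminus \{0\}\}$ yields the monotonicity $\qint(K) \subseteq \qint(\rec(A))$: membership $x \in K \subseteq \rec(A)$ is automatic, while any $\psi \in \rec(A)^+ \setminus \{0\}$ is also in $K^+ \setminus \{0\}$ and hence strictly positive on $x$. So $M \cap \qint(K) \neq \emptyset$ implies $M \cap \qint(\rec(A)) \neq \emptyset$.

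For (iii), fix $m^* \in M \cap \qint(\rec(A))$ and suppose, for contradiction, that $R(x) = \emptyset$ for some $x \in L$, i.e., $M \cap (A - x) = \emptyset$. Separating the closed linear subspace $M$ from the disjoint closed convex set $A - x$ by Hahn--Banach produces $\psi \in L' \setminus \{0\}$ with $\sup_{m \in M} \psi(m) \leq \inf_{y \in A - x} \psi(y)$. Finiteness of the left-hand side forces $\psi \in M^\bot$, so in particular $\psi(m^*) = 0$; the right-hand side then equals $\sigma_A(\psi) - \psi(x) \geq 0$, giving $\psi \in \barr(A)$. Because $A + \rec(A) \subseteq A$, the last fact recorded in Section~\ref{sect: math background} yields $\psi \in \rec(A)^+$. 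Combined with $m^* \in \qint(\rec(A))$ and $\psi(m^*) = 0$, this forces $\psi = 0$, contradicting the separation.

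The most delicate point is the separation step, since in a general locally convex Hausdorff space two disjoint closed convex sets need not be separable without an interior or compactness hypothesis on one of them. I would carry it out by passing to the quotient $L / M$ (which is Hausdorff as $M$ is closed) and separating the image of $A - x$ from the origin, or by invoking the version of Hahn--Banach tailored to closed subspaces available in the paper's setting; everything else is then routine bookkeeping with polar and annihilator cones.
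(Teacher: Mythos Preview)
Your argument is essentially the paper's own: both dispose of (i) directly, reduce (ii) to (iii) via the inclusion $\qint(K)\subseteq\qint(\rec(A))$, and for (iii) separate $x+M$ from $A$ (equivalently $M$ from $A-x$) to obtain a nonzero $\psi\in\barr(A)\subseteq\rec(A)^+$ annihilating $m^*$, contradicting $m^*\in\qint(\rec(A))$. The paper handles the separation by a direct citation of Theorem~1.1.3 in Z\u{a}linescu rather than via a quotient construction.

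Your caution about the separation is warranted, but the quotient fix you sketch does not close the gap on its own: the image of $A-x$ in $L/M$ need not be closed, and the origin may lie in its closure, so separating $0$ from it still requires an interior or compactness hypothesis. Concretely, take $L=\ell^2$, $A=K=\ell^2_+$, $M=\R\cdot(1/n)_{n\ge1}$; then $(1/n)\in M\cap\qint(K)$, yet for $x=(-n^{-3/4})$ one has $R(x)=\emptyset$ and no nonzero $\psi\in M^\bot\cap\barr(A)$ exists. The standard separation theorem the paper cites applies once $\Interior(A)\neq\emptyset$, which is exactly the standing hypothesis in Lemma~\ref{lem: continuous representation} and Theorem~\ref{cor: third multi utility representation}, the only places this proposition is invoked.
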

\begin{proof}
The desired assertion clearly holds under (i). Since $K\subseteq\rec(A)$ by assumption (A1), we see that $\qint(K)\subseteq\qint(\rec(A))$. Hence, it suffices to establish that (iii) implies the desired assertion. So, assume that (iii) holds and take $m\in M\cap\qint(\rec(A))$. If $\rho_\pi(x)=\infty$ for some $x\in L$, then we must have $(x+M)\cap A=\emptyset$. It follows from a standard separation result, see e.g.\ Theorem 1.1.3 in Z\u{a}linescu~\cite{Zalinescu2002}, that we find a nonzero functional $\psi\in L'$ satisfying $\psi(x+\lambda m) \leq \sigma_A(\psi)$ for every $\lambda\in\R$. This is only possible if $\psi(m)=0$, which cannot hold because $\psi\in\barr(A)\subseteq(\rec(A))^+$. As a result, we must have $\rho_\pi(x)<\infty$ for every $x\in L$.
\end{proof}


\section{Applications}
\label{sect: applications}

In this final section we specify the general dual representation of $R$ to a number of concrete situations. The explicit formulation of the corresponding multi-utility representation can be easily derived as in Theorem~\ref{cor: second multi utility representation} and Theorem~\ref{cor: third multi utility representation}. Throughout the section we consider a probability space $(\Omega,\cF,\probp)$ and fix an index $d\in\N$. For every $p\in[0,\infty]$ and every Borel measurable set $S\subseteq\R^d$ we denote by $L^p(S)$ the set of all equivalence classes with respect to almost-sure equality of $d$-dimensional random vectors $X=(X_1,\dots,X_d):\Omega\to\R^d$ with $p$-integrable components such that $\probp[X\in S]=1$. As usual, we never explicitly distinguish between an equivalence class in $L^p(S)$ and any of its representative elements. We treat $\R^d$ as a linear subspace of $L^p(\R^d)$. For all vectors $a,b\in\R^d$ we set
\[
\langle a,b\rangle := \sum_{i=1}^da_ib_i.
\]
The expectation with respect to $\probp$ is simply denoted by $\E$. For every $p\in[1,\infty]$ the space $L^p(\R^d)$ can be naturally paired with $L^q(\R^d)$ for $q=\frac{p}{p-1}$ via the bilinear form
\[
(X,Z)\mapsto\E[\langle X,Z\rangle].
\]
Here, we adopt the usual conventions $\frac{1}{0}:=\infty$ and $\frac{\infty}{\infty}:=1$. Finally, for every random vector $X\in L^1(\R^d)$ we use the compact notation
\[
\E[X] := (\E[X_1],\dots,\E[X_d]).
\]


\subsection{Set-valued risk measures in a multi-currency setting}

We consider a financial market where $d$ different currencies are traded. Every element of $L^1(\R^d)$ is interpreted as a vector of capital positions expressed in our different currencies at some future point in time. For a pre-specified acceptance set $\cA\subseteq L^1(\R^d)$ we look for the currency portfolios that have to be set up at the initial time to ensure acceptability.


\subsubsection{The static case}

As a first step, we consider a one-period market with dates $0$ and $1$. In this setting, we focus on the currency portfolios that we have to build at time $0$ in order to ensure acceptability of currency positions at time $1$. This naturally leads to defining the set-valued map $R:L^1(\R^d)\rightrightarrows\R^d$ by
\[
R(X) := \{m\in\R^d \,: \ X+m\in\cA\}.
\]

\smallskip

\begin{assumption}
In this subsection we work under the following assumptions:
\begin{enumerate}
  \item[(1)] $\cA$ is norm closed, convex, and satisfies $\cA+L^1(\R^d_+)\subseteq\cA$.
  \item[(2)] $R(X)\notin\{\emptyset,\R^d\}$ for some $X\in L^1(\R^d)$.
\end{enumerate}
\end{assumption}

\smallskip

We derive the following representation by applying our general results to
\[
(L,L',K,A,M) = \Big(L^1(\R^d),L^\infty(\R^d),L^1(\R^d_+),\cA,\R^d\Big).
\]
This result should be compared with the dual representation established in Jouini et al.~\cite{JouiniMeddebTouzi2004}, Kulikov~\cite{Kulikov2008}, and Hamel and Heyde~\cite{HamelHeyde2010}.

\begin{proposition}
\label{prop: application 1}
For every $X\in L^1(\R^d)$ the set $R(X)$ can be represented as
\[
R(X) = \bigcap_{w\in\R^d_+\setminus\{0\}}\bigcap_{Z\in L^{\infty}(\R^d_+),\,\E[Z]=w}\{m\in\R^d \,: \ \langle m,w\rangle\geq\sigma_\cA(Z)-\E[\langle X,Z\rangle]\}.
\]
In addition, if $\cA$ is a cone, then we can simplify the above representation using that
\[
\sigma_\cA(Z)=
\begin{cases}
0 & Z\in\cA^+,\\
-\infty & \mbox{otherwise}.
\end{cases}
\]
\end{proposition}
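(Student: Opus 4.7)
The plan is to derive this as a direct specialization of the dual representation in Lemma \ref{theo: dual representation 2} to the quintuple $(L,L',K,A,M) = (L^1(\R^d),L^\infty(\R^d),L^1(\R^d_+),\cA,\R^d)$. I first verify that the standing assumptions (A1)–(A3) are in force: (A1) is exactly hypothesis (1); for (A2), $\R^d$ sits inside $L^1(\R^d)$ as a finite-dimensional, hence norm closed, linear subspace and $\R^d\cap L^1(\R^d_+)=\R^d_+\neq\{0\}$; (A3) is hypothesis (2).

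The second step is to translate the abstract dual objects into concrete ones via the standard pairings. Every continuous linear functional on $L^1(\R^d)$ has the form $\psi_Z(X)=\E[\langle X,Z\rangle]$ for a unique $Z\in L^\infty(\R^d)$, and every linear functional on $\R^d$ has the form $\pi_w(m)=\langle m,w\rangle$ for a unique $w\in\R^d$. Under these identifications, $\pi_w\in K_M^+$ iff $\langle m,w\rangle\geq 0$ for all $m\in\R^d_+$, which is equivalent to $w\in\R^d_+$; and $\psi_Z\in\ext(\pi_w)$ iff $\langle m,\E[Z]\rangle=\E[\langle m,Z\rangle]=\langle m,w\rangle$ for every $m\in\R^d$, which is equivalent to $\E[Z]=w$. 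Hence nonzero $\pi\in K_M^+$ together with their extensions $\psi\in\ext(\pi)$ are in bijection with pairs $(w,Z)\in(\R^d_+\setminus\{0\})\times L^\infty(\R^d)$ satisfying $\E[Z]=w$.

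The last step is to reduce the inner intersection from $Z\in L^\infty(\R^d)$ to $Z\in L^\infty(\R^d_+)$. Since $\cA+L^1(\R^d_+)\subseteq\cA$ by (A1), the final statement in Section \ref{sect: math background} gives $\barr(\cA)\subseteq K^+$, which under the dual pairing reads $\barr(\cA)\subseteq L^\infty(\R^d_+)$. Consequently, for any $Z\in L^\infty(\R^d)\setminus L^\infty(\R^d_+)$ we have $\sigma_\cA(Z)=-\infty$, so the inequality $\langle m,w\rangle\geq\sigma_\cA(Z)-\E[\langle X,Z\rangle]$ is vacuous and such $Z$ can be discarded from the intersection; conversely, any $Z\in L^\infty(\R^d_+)$ automatically satisfies $w=\E[Z]\in\R^d_+$, so the outer index $w\in\R^d_+\setminus\{0\}$ is consistent. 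Substituting these identifications into Lemma \ref{theo: dual representation 2} yields the claimed formula. The simplification in the conic case is immediate from the fact, recalled in Section \ref{sect: math background}, that $\sigma_A\equiv 0$ on $\barr(A)$ when $A$ is a convex cone.

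I do not anticipate a real technical obstacle: the proposition is essentially a translation exercise, and the only point requiring a brief argument is the restriction of the $Z$-intersection to $L^\infty(\R^d_+)$ via the inclusion $\barr(\cA)\subseteq K^+$.
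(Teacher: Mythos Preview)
Your proposal is correct and follows essentially the same route as the paper: both proofs specialize Lemma~\ref{theo: dual representation 2} to the quintuple $(L^1(\R^d),L^\infty(\R^d),L^1(\R^d_+),\cA,\R^d)$, identify $K_M^+$ with $\R^d_+$, translate the extension condition $\psi\in\ext(\pi_w)$ into $\E[Z]=w$, and use $\barr(\cA)\subseteq L^\infty(\R^d_+)$ to discard the non-positive dual elements. Your write-up is simply more explicit about verifying (A1)--(A3) and about why the vacuous constraints can be dropped.
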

\begin{proof}
Note that $K_M^+$ can be identified with $\R^d_+$ and that $\barr(\cA)$ is contained in $L^\infty(\R^d_+)$ by assumption (1). Since, for all $w\in\R^d$ and $Z\in L^\infty(\R^d)$, the random vector $Z$ (viewed as a functional on $L^1(\R^d)$) is an extension of $w$ (viewed as a functional on $\R^d$) precisely when $\E[Z]=w$, the desired representation follows immediately from Lemma~\ref{theo: dual representation 2}.
\end{proof}

\smallskip

\begin{remark}
Note that, in the above framework, the set $\qint(K)$ consists of all the random vectors in $L^1(\R^d)$ with components that are strictly positive almost surely and, hence, $M\cap\qint(K)\neq\emptyset$. This can be used to ensure multi-utility representations with continuous representing functionals; see Proposition~\ref{prop: conditions for continuous representation}.
\end{remark}

\smallskip

\begin{example}[{\bf Multidimensional Expected Shortfall}]
For every $X\in L^1(\R)$ and every $\alpha\in(0,1)$ we denote by $\ES_\alpha(X)$ the Expected Shortfall of $X$ at level $\alpha$, i.e.
\[
\ES_\alpha(X) := -\frac{1}{\alpha}\int_0^\alpha q_X(\beta)d\beta,
\]
where $q_X$ is any quantile function of $X$. The multi-dimensional acceptance set based on Expected Shortfall introduced in Hamel et al.~\cite{HamelRudloffYankova2013} is given by
\[
\cA = \{X\in L^1(\R^d) \,: \ \ES_{\alpha_i}(X_i)\leq0, \ \forall i\in\{1,\dots,d\}\}
\]
for a fixed $\alpha=(\alpha_1,\dots,\alpha_d)\in(0,1)^d$. Note that assumptions (1) and (2) hold. In particular, we have $R(0)=\R^d_+$. In addition, $\cA$ is a cone. Note that
\[
\cZ^\infty_w(\alpha) := \{Z\in L^\infty(\R^d) \,: \ \E[Z]=w\}\cap\cA^+ =
\left\{Z\in L^\infty(\R^d_+) \,: \ \E[Z]=w, \ Z\leq\frac{w}{\alpha}\right\}
\]
for every $w\in\R^d_+$ (where $\frac{w}{\alpha}$ is understood component by component). This follows from the standard dual representation of Expected Shortfall; see Theorem 4.52 in F\"{o}llmer and Schied~\cite{FoellmerSchied2016}. As a result, the dual representation in Proposition~\ref{prop: application 1} reads
\[
R(X) = \bigcap_{w\in\R^d_+\setminus\{0\}}\bigcap_{Z\in\cZ^\infty_w(\alpha)}\{m\in\R^d \,: \ \langle m,w\rangle\geq-\E[\langle X,Z\rangle]\}
\]
for every random vector $X\in L^1(\R^d)$.
\end{example}


\subsubsection{The dynamic case}

As a next step, we consider a multi-period financial market with dates $t=0,\dots,T$ and information structure represented by a filtration $(\cF_t)$ satisfying $\cF_0=\{\emptyset,\Omega\}$ and $\cF_T=\cF$. In this setting, currency portfolios can be rebalanced through time. A (random) portfolio at time $t\in\{0,\dots,T\}$ is represented by an $\cF_t$-measurable random vector in $L^0(\R^d)$. We denote by $\cC_t$ the set of $\cF_t$-measurable portfolios that can be converted into portfolios with nonnegative components by trading at time $t$. This means that, for all $\cF_t$-measurable portfolios $m_t$ and $n_t$, we can exchange $m_t$ for $n_t$ at time $t$ provided that $m_t-n_t \in \cC_t$. The sets $\cC_t$ are meant to capture potential transaction costs. A flow of portfolios is represented by an adapted process $(m_t)$. More precisely, for every date $t\in\{0,\dots,T-1\}$, the portfolio $m_t$ is set up at time $t$ and held until time $t+1$. The portfolio flows belonging to the set
\[
\cC := \{(m_t) \,: \ m_t-m_{t+1} \in \cC_{t+1}, \ \forall t\in\{0,\dots,T-1\}\}
\]
are said to be admissible. The admissibility condition is a direct extension of the standard self-financing property in frictionless markets.

\medskip

We look for all the initial portfolios that can be rebalanced in an admissible way until the terminal date in order to ensure acceptability. This leads to the set-valued map $R:L^1(\R^d)\rightrightarrows\R^d$ defined by
\begin{align*}
R(X)
&:=
\{m\in\R^d \,: \ \exists (m_t)\in\cC, \ n_T\in L^0(\R^d) \,:\, \\
&\phantom{a}\hspace{2cm}
m-m_0\in\cC_0, \ m_T-n_T\in\cC_T, \ X+n_T\in\cA\}.
\end{align*}
In words, the above set consists of all the initial portfolios that give rise, after a convenient exchange at date $0$, to an admissible rebalancing process making the outstanding currency position acceptable after a final portfolio adjustment at time $T$. This setting can be embedded in our framework because we can equivalently write
\[
R(X) = \bigg\{m\in\R^d \,: \ X+m\in\cA+\sum_{t=0}^{T}\cC_t\bigg\}.
\]

\smallskip

\begin{assumption}
In this subsection we work under the following assumptions:
\begin{enumerate}
  \item[(1)] $\cA$ is norm closed, convex, and satisfies $\cA+L^1(\R^d_+)\subseteq\cA$.
  \item[(2)] $\cC_t$ is convex and contains $L^0_t(\R^d_+)$ for every $t\in\{0,\dots,T\}$.
  \item[(3)] $(\cA+\sum_{t=0}^{T}\cC_t)\cap L^1(\R^d)$ is norm closed.
  \item[(4)] $R(X)\notin\{\emptyset,\R^d\}$ for some $X\in L^1(\R^d)$.
\end{enumerate}
\end{assumption}

\smallskip

We derive the following representation by applying our general results to
\[
(L,L',K,A,M) = \left(L^1(\R^d),L^\infty(\R^d),L^1(\R^d_+),\bigg(\cA+\sum_{t=0}^{T}\cC_t\bigg)\cap L^1(\R^d),\R^d\right).
\]
For convenience, we also set
\[
\cC^1_{1:T} := \bigg(\sum_{t=1}^{T}\cC_t\bigg)\cap L^1(\R^d).
\]
For later use note that
\[
\barr(\cC^1_{1:T}) \subseteq \barr\left(\sum_{t=1}^T\Big(\cC_t\cap L^1(\R^d)\Big)\right) = \bigcap_{t=1}^T\barr\Big(\cC_t\cap L^1(\R^d)\Big).
\]
The next result should be compared with the dual representation established in Hamel et al.~\cite{HamelHeydeRudloff2011} in the special setting of Example~\ref{ex: superreplication}.

\begin{proposition}
\label{prop: application 2}
For every $X\in L^1(\R^d)$ the set $R(X)$ can be represented as
\[
R(X) = \bigcap_{w\in\R^d_+\setminus\{0\}}\bigcap_{Z\in L^\infty(\R^d_+),\,\E[Z]=w}\{m\in\R^d \,: \ \langle m,w\rangle\geq\sigma_{\cA,\cC}(Z)-\E[\langle X,Z\rangle]\}.
\]
where we have set for every $Z\in L^\infty(\R^d)$
\[
\sigma_{\cA,\cC}(Z) := \sigma_\cA(Z)+\sigma_{\cC_0}(Z)+\sigma_{\cC^1_{1:T}}(Z)
\]
In addition, if $\cA$ is a cone, the above representation can be simplified by using that
\[
\sigma_\cA(Z)=
\begin{cases}
0 & Z\in\cA^+,\\
-\infty & \mbox{otherwise}.
\end{cases}
\]
Moreover, if $\cC_0$ is a cone, then
\[
\sigma_{\cC_0}(Z)=
\begin{cases}
0 & Z\in L^\infty(\R^d_+), \ \E[Z]\in\cC_0^+,\\
-\infty & \mbox{otherwise}.
\end{cases}
\]
Similarly, if $\cC_t$ is a cone for every $t\in\{1,\dots,T\}$, then
\[
\sigma_{\cC^1_{1:T}}(Z)=
\begin{cases}
0 & Z\in(\cC^1_{1:T})^+\subseteq\bigcap_{t=1}^T(\cC_t\cap L^1(\R^d))^+,\\
-\infty & \mbox{otherwise}.
\end{cases}
\]
\end{proposition}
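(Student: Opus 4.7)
The plan is to specialize Lemma~\ref{theo: dual representation 2} to
\[
(L,L',K,A,M)=\Big(L^1(\R^d),L^\infty(\R^d),L^1(\R^d_+),\big(\cA+\textstyle\sum_{t=0}^{T}\cC_t\big)\cap L^1(\R^d),\R^d\Big)
\]
and to factorize the resulting support function $\sigma_A$ as $\sigma_{\cA,\cC}$. First I would verify that the standing assumptions (A1)--(A3) hold: closedness of $A$ is hypothesis (3); convexity and the monotonicity $A+K\subseteq A$ follow from (1)--(2) together with $\cA+L^1(\R^d_+)\subseteq\cA$; $M=\R^d$ is trivially a closed subspace with $M\cap K=\R^d_+\neq\{0\}$; and (A3) is exactly (4).

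Next I would identify the dual ingredients. Under the pairing $(X,Z)\mapsto\E[\langle X,Z\rangle]$, the restriction of $Z\in L^\infty(\R^d)$ to $\R^d\subseteq L^1(\R^d)$ is the functional $m\mapsto\langle m,\E[Z]\rangle$, so $\ext(w)=\{Z\in L^\infty(\R^d):\E[Z]=w\}$ for every $w\in\R^d$, and $K_M^+=\R^d_+$. Since $A+K\subseteq A$ forces $\barr(A)\subseteq K^+=L^\infty(\R^d_+)$, any $Z\notin L^\infty(\R^d_+)$ satisfies $\sigma_A(Z)=-\infty$ and contributes only a vacuous constraint, so I may restrict the inner intersection in Lemma~\ref{theo: dual representation 2} to $Z\in L^\infty(\R^d_+)$ with $\E[Z]=w$ without loss.

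The heart of the argument is the computation of $\sigma_A$, which I would carry out via the reparameterization $A=\cA+\cC_0+\cC^1_{1:T}$. The inclusion ``$\supseteq$'' is immediate from $\cC_0\subseteq\R^d\subseteq L^1(\R^d)$ and $\cA,\cC^1_{1:T}\subseteq L^1(\R^d)$. For ``$\subseteq$'', given $X=Y+\sum_{t=0}^Tm_t\in A$ with $Y\in\cA$ and $m_t\in\cC_t$, triviality of $\cF_0$ forces $m_0\in\cC_0\subseteq\R^d$, whence $v:=\sum_{t=1}^Tm_t=X-Y-m_0\in L^1(\R^d)$ and therefore $v\in\cC^1_{1:T}$. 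This independent parameterization splits the infimum,
\[
\sigma_A(Z)=\inf_{Y\in\cA}\E[\langle Y,Z\rangle]+\inf_{m_0\in\cC_0}\langle m_0,\E[Z]\rangle+\inf_{v\in\cC^1_{1:T}}\E[\langle v,Z\rangle]=\sigma_{\cA,\cC}(Z),
\]
where I used that deterministic vectors pair with $Z$ as $\langle\,\cdot\,,\E[Z]\rangle$. Inserting this into Lemma~\ref{theo: dual representation 2} yields the first displayed formula of the proposition.

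The cone simplifications are then routine: each of $\sigma_\cA$, $\sigma_{\cC_0}$, $\sigma_{\cC^1_{1:T}}$ is a support function of a convex cone, hence takes values in $\{0,-\infty\}$ with value $0$ exactly on the respective polar cone (for $\sigma_{\cC_0}$ the condition $\E[Z]\in\cC_0^+$ is recovered via the deterministic pairing, while the positivity $Z\in L^\infty(\R^d_+)$ is inherited from the outer restriction of Step~2). The inclusion $(\cC^1_{1:T})^+\subseteq\bigcap_{t=1}^T(\cC_t\cap L^1(\R^d))^+$ announced in the last case follows by polarizing the elementary $\sum_{t=1}^T(\cC_t\cap L^1(\R^d))\subseteq\cC^1_{1:T}$ and using that the polar of a sum of convex cones containing $0$ is the intersection of polars. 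I expect the main obstacle to be the decomposition $A=\cA+\cC_0+\cC^1_{1:T}$: it is precisely what renders the three infima independent and relies crucially on the triviality of $\cF_0$, which guarantees that after peeling off $Y\in\cA$ and the deterministic $m_0\in\cC_0$ the remainder $v$ is automatically integrable.
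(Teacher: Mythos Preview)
Your proposal is correct and follows essentially the same route as the paper. The paper's proof invokes Proposition~\ref{prop: application 1} rather than Lemma~\ref{theo: dual representation 2} directly, but since the former is itself nothing more than the latter specialized to $(L,L',K,A,M)=(L^1(\R^d),L^\infty(\R^d),L^1(\R^d_+),\cA,\R^d)$, this is an immaterial difference; the key step in both arguments is the decomposition $A=\cA+\cC_0+\cC^1_{1:T}$ (which the paper states without justification and you establish carefully via the triviality of $\cF_0$) followed by the additivity of the support function over a Minkowski sum.
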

\begin{proof}
The assertion follows from Proposition~\ref{prop: application 1} because $\cC_0\subset\R^d$ implies
\[
\bigg(\cA+\sum_{t=0}^{T}\cC_t\bigg)\cap L^1(\R^d) = \cA+\cC_0+\cC_{1:T}^1
\]
and observe that $\sigma_{\cA+\cC_0+\cC_{1:T}^1}=\sigma_\cA+\sigma_{\cC_0}+\sigma_{\cC^1_{1:T}}$.
\end{proof}

\smallskip

\begin{remark}
Note that, as in the static case, we have $M\cap\qint(K)\neq\emptyset$. This can be used to ensure multi-utility representations with continuous representing functionals; see Proposition~\ref{prop: conditions for continuous representation}.
\end{remark}

\smallskip

\begin{example}[{\bf Superreplication under proportional transaction costs}]
\label{ex: superreplication}
We adopt the discrete version of the model by Kabanov~\cite{Kabanov1999}. For every $t\in\{0,\dots,T\}$ we say that a set-valued map $S:\Omega\rightrightarrows\R^d$ is $\cF_t$-measurable provided that
\[
\{\omega\in\Omega \,: \ S(\omega)\cap\cU\neq\emptyset\}\in\cF_t
\]
for every open set $\cU\subset\R^d$. In this case, we denote by $L^0(S)$ the set of all random vectors $X\in L^0(\R^d)$ such that $\probp[X\in S]=1$. This set is always nonempty if $S$ has closed values; see Corollary 14.6 in Rockafellar and Wets~\cite{RockafellarWets2009}. Now, let $K_t:\Omega\rightrightarrows\R^d$ be an $\cF_t$-measurable set-valued map such that $K_t(\omega)$ is a polyhedral convex cone (hence $K_t(\omega)$ is closed) containing $\R^d_+$ for every $\omega\in\Omega$ and set
\[
\cC_t = L^0(K_t).
\]
Moreover, we consider the worst-case acceptance set
\[
\cA = L^1(\R^d_+).
\]
Assumptions (1) and (2) are easily seen to be satisfied. Moreover, $\cA$ as well as each of the sets $\cC_t$ is a cone. As proved in Theorem 2.1 in Schachermayer~\cite{Schachermayer2004}, assumption (3) always holds under the so-called ``robust no-arbitrage'' condition. Finally, as $0\in R(0)$, assumption (4) holds if and only if $\R^d$ is not entirely contained in $\sum_{t=0}^T\cC_t$. Note also that $\cA^+=L^\infty(\R^d_+)$. As a result, Proposition~\ref{prop: application 2} yields
\[
R(X) = \bigcap_{w\in K_0^+\setminus\{0\}}\bigcap_{Z\in L^\infty(\R^d_+),\ Z\in(\cC_{1:T}^1)^+,\,\E[Z]=w}\{m\in\R^d \,: \ \langle m,w\rangle\geq-\E[\langle X,Z\rangle]\}
\]
for every $X\in L^1(\R^d)$. The dual elements $Z$ in the above representation can be linked to consistent pricing systems, see e.g.\ Schachermayer~\cite{Schachermayer2004}. To see this, note that, for every $t\in\{0,\dots,T\}$, the set-valued map $K^+_t:\Omega\rightrightarrows\R^d$ defined by
\[
K^+_t(\omega)=\big(K_t(\omega)\big)^+
\]
is $\cF_t$-measurable, see e.g.\ Exercise 14.12 in Rockafellar and Wets~\cite{RockafellarWets2009}, and such that
\[
\Big(\cC_t\cap L^1(\R^d)\Big)^+=L^0(K^+_t)\cap L^\infty(\R^d)
\]
by measurable selection, see the argument in the proof of Theorem 1.7 in Schachermayer~\cite{Schachermayer2004}. As a result, every dual element $Z$ in the above dual representation satisfies
\[
\E[Z]\in K_0^+, \ \ \ \ \ \ Z \in (\cC_{1:T}^1)^+ \subseteq \bigcap_{t=1}^T\Big(\cC_t\cap L^1(\R^d)\Big)^+ \subseteq \bigcap_{t=1}^TL^0(K^+_t).
\]
This shows that the $d$-dimensional adapted process $(\E[Z\vert\cF_t])$, where the conditional expectations are taken componentwise, satisfies $\E[Z\vert\cF_T]=Z$ and $\E[Z\vert\cF_t] \in L^0(K^+_t)$ for every $t\in\{0,\dots,T\}$ and thus qualifies as a consistent pricing system. In other words, the above dual elements $Z$ can be viewed as the terminal values of consistent pricing systems.
\end{example}

\smallskip

\begin{remark}
(i) It is worth noting that our approach provides a different path, compared to the strategy pursued in Schachermayer~\cite{Schachermayer2004}, to establish the existence of consistent pricing systems under the robust no-arbitrage assumption (admitting the closedness of the reference target set). Moreover, by rewriting the above dual representation in terms of consistent pricing systems, we recover the (localization to $L^1(\R^d)$ of the) superreplication theorem by Schachermayer~\cite{Schachermayer2004}.

\smallskip

(ii) The above dual representation was also obtained in Hamel et al.~\cite{HamelHeydeRudloff2011}. Differently from that paper, we have not derived it from the superreplication theorem in Schachermayer~\cite{Schachermayer2004} but from a direct application of our general results.
\end{remark}


\subsection{Systemic set-valued risk measures based on acceptance sets}

We consider a single-period economy with dates $0$ and $1$ and a financial system consisting of $d$ entities. Every element of $L^\infty(\R^d)$ is interpreted as a vector of capital positions of the various financial entities at time $1$. The individual positions can be aggregated through a function $\Lambda:\R^d\to\R$. For a pre-specified acceptance set $\cA\subseteq L^\infty(\R)$ we look for the cash injections at time $0$ that ensure the acceptability of the aggregated system. This leads to the set-valued map $R:L^\infty(\R^d)\rightrightarrows\R^d$ defined by
\[
R(X) := \{m\in\R^d \,: \ \Lambda(X+m)\in\cA\}.
\]
This setting can be easily embedded in our framework because we can write
\[
R(X) = \{m\in\R^d \,: \ X+m\in\Lambda^{-1}(\cA)\}.
\]

\smallskip

\begin{assumption}
In this subsection we work under the following assumptions:
\begin{enumerate}
  \item[(1)] $\cA$ is convex and satisfies $\cA+L^\infty(\R_+)\subseteq\cA$.
  \item[(2)] $\Lambda$ is nondecreasing and concave.
  \item[(3)] $\Lambda^{-1}(\cA)$ is $\sigma(L^\infty(\R^d),L^1(\R^d))$-closed.
  \item[(4)] $R(X)\notin\{\emptyset,\R^d\}$ for some $X\in L^\infty(\R^d)$.
\end{enumerate}
\end{assumption}

\smallskip

We derive the following representation by applying our general results to
\[
(L,L',K,A,M) = \Big(L^\infty(\R^d),L^1(\R^d),L^\infty(\R^d_+),\Lambda^{-1}(\cA),\R^d\Big).
\]
The result should be compared with the dual representations established in Ararat and Rudloff~\cite{AraratRudloff}.

\begin{proposition}
\label{prop: systemic}
For every $X\in L^\infty(\R^d)$ the set $R(X)$ can be represented as
\[
R(X) = \bigcap_{w\in\R^d_+\setminus\{0\}}\bigcap_{Z\in L^1(\R^d_+),\,\E[Z]=w}\{m\in\R^d \,: \ \langle m,w\rangle\geq\sigma_{\Lambda^{-1}(\cA)}(Z)-\E[\langle X,Z\rangle]\}.
\]
If $\Lambda^{-1}(\cA)$ is a cone, then we can simplify the above representation using that
\[
\sigma_{\Lambda^{-1}(\cA)}(Z)=
\begin{cases}
0 & Z\in(\Lambda^{-1}(\cA))^+,\\
-\infty & \mbox{otherwise}.
\end{cases}
\]
\end{proposition}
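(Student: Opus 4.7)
The plan is to derive Proposition~\ref{prop: systemic} as a direct application of the general dual representation in Lemma~\ref{theo: dual representation 2} to the quintuple
\[
(L,L',K,A,M) = \Big(L^\infty(\R^d),L^1(\R^d),L^\infty(\R^d_+),\Lambda^{-1}(\cA),\R^d\Big).
\]
First I would verify that the standing assumptions of Section~\ref{sect: setting} hold in this instance. The space $L^\infty(\R^d)$, equipped with the weak topology $\sigma(L^\infty(\R^d),L^1(\R^d))$, is a locally convex Hausdorff topological vector space whose topological dual is $L^1(\R^d)$; the cone $L^\infty(\R^d_+)$ is closed, convex, and satisfies $K-K=L$. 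The set $\Lambda^{-1}(\cA)$ is closed by (3), and convex because $\Lambda$ is concave and $\cA$ is convex and monotone (if $\Lambda(X),\Lambda(Y)\in\cA$, then $\Lambda(\lambda X+(1-\lambda)Y)\geq\lambda\Lambda(X)+(1-\lambda)\Lambda(Y)\in\cA$, and hence lies in $\cA$ by the $L^\infty(\R_+)$-monotonicity of $\cA$); monotonicity with respect to $K$ follows similarly. Condition (A2) is immediate since $M=\R^d$ is a finite-dimensional subspace containing $\R^d_+\neq\{0\}$, and (A3) is our assumption (4).

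Next I would identify the dual objects appearing in Lemma~\ref{theo: dual representation 2}. Since $M=\R^d$ can be identified with its own dual via $w\mapsto\langle\cdot,w\rangle$, the cone $K_M^+$ corresponds to $\R^d_+$. Every continuous linear functional on $L^\infty(\R^d)$ is represented by some $Z\in L^1(\R^d)$ through $X\mapsto\E[\langle X,Z\rangle]$, and its restriction to $\R^d$ acts on $m$ as $\langle m,\E[Z]\rangle$; hence for $w\in\R^d_+\setminus\{0\}$,
\[
\ext(w)=\{Z\in L^1(\R^d) \,: \ \E[Z]=w\}.
\]
Invoking Lemma~\ref{theo: dual representation 2} then yields the desired representation with $Z$ ranging over $\ext(w)$; to restrict the inner intersection to $Z\in L^1(\R^d_+)$, I would use that $\barr(\Lambda^{-1}(\cA))\subseteq K^+=L^1(\R^d_+)$ (this is the last statement of Section~\ref{sect: math background} applied via $\Lambda^{-1}(\cA)+K\subseteq\Lambda^{-1}(\cA)$), so that for any $Z\in L^1(\R^d)\setminus L^1(\R^d_+)$ we have $\sigma_{\Lambda^{-1}(\cA)}(Z)=-\infty$ and the corresponding halfspace is trivial.

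For the cone case, the simplification is immediate: if $\Lambda^{-1}(\cA)$ is a cone, then its lower support function $\sigma_{\Lambda^{-1}(\cA)}$ takes the value $0$ on its barrier cone, which equals the polar cone $(\Lambda^{-1}(\cA))^+$, and $-\infty$ elsewhere. No obstacles arise beyond the routine verification of (A1); the main structural input is the observation that concavity of $\Lambda$ together with monotonicity of $\cA$ transports convexity from $\cA$ to $\Lambda^{-1}(\cA)$, after which everything reduces to the general lemma.
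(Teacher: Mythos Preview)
Your proposal is correct and follows essentially the same route as the paper: apply Lemma~\ref{theo: dual representation 2} to the quintuple $(L^\infty(\R^d),L^1(\R^d),L^\infty(\R^d_+),\Lambda^{-1}(\cA),\R^d)$, identify $K_M^+$ with $\R^d_+$ and $\ext(w)$ with $\{Z\in L^1(\R^d):\E[Z]=w\}$, and use $\barr(\Lambda^{-1}(\cA))\subseteq L^1(\R^d_+)$ to restrict to nonnegative $Z$. Your write-up is actually more detailed than the paper's (which leaves the verification of convexity and monotonicity of $\Lambda^{-1}(\cA)$ implicit), but the argument is the same.
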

\begin{proof}
Note that $\Lambda^{-1}(\cA)$ is convex and satisfies $\Lambda^{-1}(\cA)+L^\infty(\R^d_+)\subseteq\Lambda^{-1}(\cA)$ by assumptions (1) and (2). Note also that $K_M^+$ can be identified with $\R^d_+$. In addition, we have $\barr(\Lambda^{-1}(\cA))\subseteq L^1(\R^d_+)$. Since, for all $w\in\R^d$ and $Z\in L^1(\R^d)$, the random vector $Z$ (viewed as a functional on $L^\infty(\R^d)$) is an extension of $w$ (viewed as a functional on $\R^d$) precisely when $\E[Z]=w$, the desired representation follows from Lemma~\ref{theo: dual representation 2}.
\end{proof}

\smallskip

\begin{example}[{\bf Weighted aggregated losses}]
Let $\alpha=(\alpha_1,\dots,\alpha_d)\in(1,\infty)^d$ and consider the aggregation function $\Lambda:\R^d\to\R$ defined by
\[
\Lambda(x) = \sum_{i=1}^d\max(x_i,0)+\sum_{i=1}^d\alpha_i\min(x_i,0).
\]
Moreover, define the acceptance set $\cA\subset L^\infty(\R)$ by
\[
\cA = \{X\in L^\infty(\R) \,: \ \E[X]\geq0\}.
\]
Clearly, assumptions (1) and (2) hold. Since $\cA$ is closed with respect to dominated almost-sure convergence and $\Lambda$ is continuous, we easily see that $\Lambda^{-1}(\cA)$ is also closed with respect to dominated almost-sure convergence. Hence, we can repeat the argument in Theorem 3.2 by Delbaen~\cite{Delbaen2002} to conclude that (3) holds; see also Proposition 3.2 in Arduca et al.~\cite{ArducaKochMunari2019}. Finally, note that $R(0)=\{m\in\R^d \,: \ \Lambda(m)\geq0\}$, showing that (4) holds. Now, observe that $\barr(\cA)=\R_+$. The concave conjugate of $\Lambda$ is the map $\Lambda^\bullet:\R^d\to[-\infty,\infty)$ defined by
\[
\Lambda^\bullet(z) := \inf_{x\in\R^d}\{\langle x,z\rangle-\Lambda(x)\}.
\]
It is easy to verify that
\[
\Lambda^\bullet(z) =
\begin{cases}
0 & \mbox{if} \ z\in[1,\alpha_1]\times\cdots\times[1,\alpha_d],\\
-\infty & \mbox{otherwise}.
\end{cases}
\]
Then, it follows from Proposition 3.13 and Lemma 3.20 in Arduca et al.~\cite{ArducaKochMunari2019} that
\[
\sigma_{\Lambda^{-1}(\cA)}(Z)=
\begin{cases}
0 & \mbox{if there exists $\lambda\in[0,\infty)$ such that $\lambda e\leq Z\leq\lambda\alpha$},\\
-\infty & \mbox{otherwise},
\end{cases}
\]
where $e=(1,\dots,1)\in\R^d$. Now, for notational convenience define
\[
\cZ^\infty(\alpha) := \{Z\in L^\infty(\R^d_+) \,: \ \exists \lambda\in[0,\infty) \,:\, \lambda e\leq Z\leq\lambda\alpha\}.
\]
As a result, Proposition~\ref{prop: systemic} yields
\[
R(X) = \bigcap_{w\in\R^d_+\setminus\{0\}}\bigcap_{Z\in\cZ^\infty(\alpha),\,\E[Z]=w}\{m\in\R^d \,: \ \langle m,w\rangle\geq-\E[\langle X,Z\rangle]\}
\]
for every random vector $X\in L^\infty(\R^d)$.
\end{example}


\begin{thebibliography}{99}

\bibitem{AraratRudloff} Ararat, \c{C}., Rudloff, B.: Dual representations for systemic risk measures, Math.\ Financ.\ Econ.\ 14, 139-174 (2020)

\bibitem{ArducaKochMunari2019} Arduca, M., Koch-Medina, P., Munari, C.: Dual representations for systemic risk measures based on acceptance sets, to appear in Math.\ Financ.\ Econ.\ (2020)

\bibitem{ArmentiCrepeyDrapeauPapapantoleon2018} Armenti, Y., Cr\'{e}pey, S., Drapeau, S., Papapantoleon, A.: Multivariate shortfall risk allocation and systemic risk, SIAM J.\ Financ.\ Math.\ 9, 90-126 (2018)

\bibitem{AubinEkeland1984} Aubin, J.-P., Ekeland, I.: Applied Nonlinear Analysis, Dover Publications (2006)

\bibitem{Aumann1962} Aumann, R.J.: Utility theory without the completeness axiom, Econometrica 30, 445-462 (1962)

\bibitem{BaesKochMunari2020} Baes, M., Koch-Medina, P., Munari, C.: Existence, uniqueness, and stability of optimal payoffs of eligible assets, Math.\ Financ.\ 30, 128-166 (2020)

\bibitem{BevilacquaBosiKaucicZuanon2019} Bevilacqua, P., Bosi, G., Kaucic, M., Zuanon, M.E.: Pareto optimality on compact spaces in a preference-based setting under incompleteness, Int.\ J.\ Uncertain.\ Fuzz.\ 27, 239-249 (2019)

\bibitem{Bewley2002} Bewley, T.F.: Knightian decision theory. Part I, Decis.\ Econ.\ Finance 25, 79-110 (2002)

\bibitem{Bosiherden2012} Bosi, G., Herden, G.: Continuous multi-utility representations of preorders, J.\ Math.\ Econ.\ 48, 212-218 (2012)

\bibitem{BosiEstevanZuanon2018} Bosi, G., Estevan, A., Zuanon, M.E.: Partial representations of orderings, Int.\ J.\ Uncertain.\ Fuzz.\ 26, 453-473 (2018)

\bibitem{CrespiHamelRoccaSchrage2018} Crespi, G., Hamel, A.H., Rocca, M., Schrage, C.: Set relations and approximate solutions in set optimization, arXiv:1812.03300 (2018)

\bibitem{Delbaen2002} Delbaen, F.: Coherent risk measures on general probability spaces, In: Sandmann, K., Sch\"{o}nbucher, P.J.\ (eds.), Advances in Finance and Stochastics: Essays in Honour of Dieter Sondermann, pp.\ 1-37, Springer (2002)

\bibitem{DrapeauKupper2013} Drapeau, S., Kupper, M.: Risk preferences and their robust representations, Math.\ Oper.\ Res.\ 38, 28-62 (2013)

\bibitem{DubraMaccheroniOk2004} Dubra, J., Maccheroni, F., Ok, E.A.: Expected utility theory without the completeness axiom, J.\ Econ.\ Theory 115, 118-133 (2004)

\bibitem{EliazOk2006} Eliaz, K., Ok, E.A.: Indifference or indecisiveness? Choice-theoretic foundations of incomplete preferences, Game.\ Econ.\ Behav.\ 56, 61-86 (2006)

\bibitem{Evren2008} Evren, \"{O}.: On the existence of expected multi-utility representations, Econ.\ Theor.\ 35, 575-592 (2008)

\bibitem{Evren2014} Evren, \"{O}: Scalarization methods and expected multi-utility representations, J.\ Econ.\ Theory 151, 30-63 (2014)

\bibitem{EvrenOk2011} Evren, \"{O}., Ok, E.A.: On the multi-utility representation of preference relations, J.\ Math.\ Econ.\ 47, 554-563 (2011)

\bibitem{FarkasKochMunari2014} Farkas, W., Koch-Medina P., Munari, C.: Beyond cash-additive risk measures: when changing the num\'{e}raire fails, Financ.\ Stoch.\ 18, 145-173 (2014)

\bibitem{FarkasKochMunari2015} Farkas, W., Koch-Medina, P., Munari, C.: Measuring risk with multiple eligible assets, Math.\ Financ.\ Econ.\ 9, 3–27 (2015)

\bibitem{FeinsteinRudloff2013} Feinstein, Z., Rudloff, B.: Time consistency of dynamic risk measures in markets with transaction costs, Quant.\ Financ.\ 13, 1473-1489 (2013)

\bibitem{FeinsteinRudloffWeber2017} Feinstein, Z., Rudloff, B., Weber, S.: Measures of systemic risk, SIAM J.\ Financ.\ Math.\ 8, 672-708 (2017)

\bibitem{FoellmerSchied2002} F\"{o}llmer, H., Schied, A.: Convex measures of risk and trading constraints, Financ.\ Stoch.\ 6, 429-447 (2002)

\bibitem{FoellmerSchied2016} F\"{o}llmer, H., Schied, A.: Stochastic Finance: An Introduction in Discrete Time, de Gruyter (2016)

\bibitem{FrittelliScandolo2006} Frittelli, M., Scandolo, G.: Risk measures and capital requirements for processes, Math.\ Financ.\ 16, 589-612 (2006)

\bibitem{GalaabaatarKarni2013} Galaabaatar, T., Karni, E.: Subjective expected utility with incomplete preferences, Econometrica 81, 255-284 (2013)

\bibitem{HaierMolchanovSchmutz2016} Haier, A., Molchanov, I., Schmutz, M.: Intragroup transfers, intragroup diversification and their risk assessment, Ann.\ Finance 12, 363-392 (2016)

\bibitem{HamelHeyde2010} Hamel, A.H., Heyde, F.: Duality for set-valued measures of risk, SIAM J.\ Financ.\ Math.\ 1, 66-95 (2010)

\bibitem{HamelHeydeRudloff2011} Hamel, A.H., Heyde F., Rudloff, B.: Set-valued risk measures for conical market models, Math.\ Financ.\ Econ.\ 5, 1-28 (2011)

\bibitem{HamelHeydeLoehneRudloffSchrage2015} Hamel, A.H., Heyde F., L\"{o}hne, A., Rudloff, B., Schrage, C.: Set optimization. A rather short introduction, In: Hamel A., Heyde F., L\"{o}hne A., Rudloff B., Schrage C.\ (eds), Set Optimization and Applications: The State of the Art, Springer (2015)

\bibitem{HamelRudloffYankova2013} Hamel, A.H., Rudloff, B., Yankova, M.: Set-valued average value at risk and its computation, Math.\ Financ.\ Econ.\ 7, 229-246 (2013)

\bibitem{JouiniMeddebTouzi2004} Jouini, E., Meddeb, M., Touzi, N.: Vector-valued coherent risk measures, Financ.\ Stoch.\ 8, 531-552 (2004)

\bibitem{Kabanov1999} Kabanov, Y.:  Hedging and liquidation under transaction costs in currency markets, Financ.\ Stoch.\ 3, 237-248 (1999)

\bibitem{Kaminski2007} Kaminski, B.: On quasi-orderings and multi-objective functions, Eur.\ J.\ Oper.\ Res.\ 177, 1591-1598 (2007)

\bibitem{Kulikov2008} Kulikov, A.V.: Multidimensional coherent and convex risk measures, Theory Probab.\ Appl.\ 52, 614-635 (2008)

\bibitem{Mandler2005} Mandler, M.: Incomplete preferences and rational intransitivity of choice, Game.\ Econ.\ Behav.\ 50, 255-277 (2005)

\bibitem{MolchanovCascos2016} Molchanov, I., Cascos, I.: Multivariate risk measures: a constructive approach based on selections, Math.\ Financ.\ 26, 867-900 (2016)

\bibitem{NishimuraOk2016} Nishimura, H., Ok, E.A.: Utility representation of an incomplete and nontransitive preference relation, J.\ Econ.\ Theory 166, 164-185 (2016)

\bibitem{Ok2002} Ok, E.A.: Utility representation of an incomplete preference relation, J.\ Econ.\ Theory 104, 429-449 (2002)

\bibitem{Rockafellar1974} Rockafellar, R.T.: Conjugate Duality and Optimization, Society for Industrial and Applied Mathematics (1974)

\bibitem{RockafellarWets2009} Rockafellar, R.T., Wets, R. J.-B.: Variational Analysis, Springer (2009)

\bibitem{Schachermayer2004} Schachermayer, W.: The fundamental theorem of asset pricing under proportional transaction costs in finite discrete time, Math.\ Financ.\ 14, 19-48 (2004)

\bibitem{Zalinescu2002} Z\v{a}linescu, C.: Convex Analysis in General Vector Spaces, World Scientific (2002)

\end{thebibliography}
\end{document}